\documentclass[journal]{IEEEtran}

\usepackage{cite}
\usepackage{amsmath,amssymb,amsfonts}
\usepackage{bm}
\usepackage{graphicx}
\usepackage{xcolor}
\usepackage{algorithm}
\usepackage{algorithmic}
\usepackage[explicit]{titlesec}

\usepackage[caption=false,font=footnotesize,labelfont=rm,textfont=rm]{subfig}

\makeatletter
\renewcommand{\subsection}{%
	\@startsection{subsection}{2}{\z@}%
	{2.0ex plus 0.5ex minus 0.5ex}%
	{0.5ex plus 0.2ex minus 0ex}%
	{\normalfont\normalsize\itshape}}
\makeatother

\usepackage[colorlinks=true,citecolor=red,linkcolor=blue,urlcolor=blue]{hyperref}

\newtheorem{proposition}{Proposition}

\newtheorem{remark}{Remark}

\newenvironment{proof}
{\noindent\textit{Proof:}\ }
{\hfill$\square$\par\medskip}

\begin{document}
	
	
	\title{\fontsize{16 pt}{\baselineskip}\selectfont Intelligent Reflecting Surface Deployment for Low-Altitude Coverage: Illumination Geometry, Directional Characteristics, and Optimization}
	
	\author{Guoying~Zhang,
		Qingqing Wu, 
		Ailing Zheng,
		Xingxiang Peng,
		Wen Chen,
		Wei Feng
		\thanks{Guoying Zhang, Qingqing Wu, Ailing Zheng, Xingxiang Peng and Wen Chen are with the School of Integrated Circuits, Shanghai Jiao TongUniversity, Shanghai 200240, China (e-mail: \{gy\_zhang; qingqingwu; ailing.zheng; peng\_xingxiang; wenchen\}@sjtu.edu.cn). Wei Feng is with the Department of Electronic Engineering, State Key Laboratory of Space Network and Communications, Tsinghua University, Beijing 100084, China (e-mail: fengwei@tsinghua.edu.cn). }
	}
	
	\maketitle

 	\begin{abstract}
	Terrestrial base stations (BSs) are typically configured with fixed downtilt to serve ground users, resulting in weak illumination of low-altitude airspace even under line-of-sight (LoS) propagation. In this paper, we establish a channel model that incorporates BS and intelligent reflecting surface (IRS) radiation patterns for three-dimensional (3D) low-altitude coverage while preserving the existing BS configuration. We formulate a budget-constrained IRS deployment problem that jointly determines candidate-site selection, IRS orientations, and phase shifts to maximize the worst-case signal-to-noise ratio (SNR) over the 3D low-altitude airspace. The selected sites and optimized IRS parameters remain fixed after deployment, yielding a quasi-static IRS configuration. We characterize the illumination geometry between the fixed-downtilt BS and rooftop candidates by deriving the nonnegative installation-height range satisfying the BS main-lobe condition. The separation between the mapped main-lobe height boundaries grows linearly with horizontal BS-to-site distance and decreases inversely with the number of BS antennas.
	We further derive an analytical lower bound on the regional worst-case normalized array gain achievable through IRS phase design over served directions with different direction spans. The resulting sufficient direction span decreases inversely with the square root of the number of IRS elements when the same worst-case normalized gain guarantee is maintained. We develop a mixed-integer alternating optimization (AO) algorithm to solve the resulting problem. Simulation results validate the analytical characterizations and show that the proposed scheme achieves higher worst-case SNR than benchmarks across different deployment budgets.
 	\end{abstract}
 	
	\begin{IEEEkeywords}
		Low-altitude airspace coverage, intelligent reflecting surface, quasi-static IRS, site selection, phase-shift design.
	\end{IEEEkeywords}

	\section{Introduction}\label{sec:intro}
	\bstctlcite{BSTcontrol}

	Low-altitude applications such as unmanned aerial vehicle (UAV) inspection, logistics, and urban air mobility require reliable connectivity throughout a three-dimensional (3D) region, over which propagation conditions and antenna gains vary with both horizontal position and altitude~\cite{shakhatreh2019uav,mozaffari2019tutorial,zeng2019accessing}. Cellular networks have been widely investigated for such aerial access~\cite{lin2018sky,fotouhi2019survey,geraci2018understanding}, and the Third Generation Partnership Project (3GPP) has examined their feasibility and limitations for aerial vehicles~\cite{muruganathan2018overview,3gpp36777}.
	As the horizontal position and altitude of an aerial terminal vary, the elevation angle relative to the BS changes over a broad range, and the received antenna gain follows different portions of the vertical radiation pattern. 
	Line-of-sight (LoS) propagation between the BS and the aerial terminal establishes an unobstructed path, while the antenna gain along that path depends on the position within the radiation pattern and can remain low. Aerial locations may be served through sidelobes or may lie near radiation nulls, which produces pronounced spatial variation in the received signal strength across the low-altitude region~\cite{euler2018mobility,geraci2018understanding}. The weakest locations can consequently limit the service quality of the entire region even when most locations retain LoS propagation.  
	
	Two transmit-side remedies have been investigated, each with a distinct cost. Reconfiguring the downtilt or installing uptilted antennas raises the gain toward aerial elevations~\cite{chowdhury2021connected,maeng2021uptilt}, but reduces the gain available to ground users for which the downtilt was selected and increases inter-cell interference. In addition, the BS main-lobe illumination covers only part of the airspace since the elevation angle relative to the BS varies with both horizontal position and altitude~\cite{nguyen2018coexistence,geraci2018understanding}. Densifying the BSs or introducing active relays and aerial platforms creates additional transmission points~\cite{alhourani2014optimal,mozaffari2015drone} at the cost of additional transmit power and radio-frequency (RF) hardware in networks already deployed for terrestrial service.
	These limitations motivate a passive coverage-enhancement method that preserves the existing BS transmission configuration and avoids the additional power consumption and RF hardware cost of new active transmission nodes.

	Intelligent reflecting surfaces (IRSs) enable configurable passive signal redirection. An IRS is a planar array of passive elements that reflect the incident signal with adjustable phase shifts, allowing the reflected fields of the elements to add constructively at a location~\cite{wu2019joint}. 
	An IRS redirects an incident signal without RF chains. This passive operation distinguishes an IRS from an active relay and reduces power consumption and hardware cost~\cite{wu2021tutorial,wu2020discrete}. IRS-aided aerial communication studies have focused on aerial terminals and air-ground links, where UAV trajectories and IRS phase shifts are jointly optimized to improve performance~\cite{li2020uav,you2021enabling}. IRS-assisted aerial links have been investigated under malicious jamming~\cite{ji2022trajectory,li2026openset}. The IRS-assisted designs in~\cite{li2020uav,you2021enabling,ji2022trajectory} adapt the reflection coefficients according to the location or trajectory of an aerial terminal. Three-dimensional passive coverage has been studied through aerial IRS placement and max-min passive beamforming with beam flattening over target areas~\cite{lu2021aerial,ma2022passive}. For a given IRS location and orientation, the quasi-static design in~\cite{jiang2025quasi} shapes the reflected beam over a prescribed angular range using fixed phase shifts shared across user locations.
 
	Beyond link-level and regional beam design, network-level studies incorporate IRS locations and deployment densities, along with association rules, into system analysis and optimization~\cite{wu2024intelligent}. Stochastic-geometry analyses characterize coverage and throughput under different IRS densities and association rules across cellular networks as well as hybrid active-passive and RF-powered Internet of Things networks~\cite{lyu2020spatial,lyu2021hybrid,sun2023rfpowered}. Other studies jointly optimize multi-IRS placement and power control for target-area service or coordinate IRS deployment with movable antenna positions for cost-effective coverage~\cite{ling2021placement,gao2026twoscale}. Network-level formulations study the effect of spatial deployment and association on network performance, with each IRS characterized by the corresponding location and associated link model. The network-level deployment variables comprise IRS locations, densities, and associations with prescribed panel mounting geometry and directional response.
	
	Practical deployment associates IRS panels with structures such as rooftops and walls in urban environments~\cite{etsi2023ris,cao2021aerial}. The design in~\cite{fu2024site} considers candidate building facets, site-dependent blockage, and the BS and IRS radiation patterns. An integer linear program selects IRS locations and user associations from link-level metrics at ground locations, while the panel orientation at each selected facet remains fixed. Flexible-reflector designs optimize continuous positions and rotation angles to maximize the minimum expected received power over a two-dimensional area~\cite{lu2025flexible}. The reflectors are passive metal surfaces whose design variables are positions and rotation angles within a continuous placement region. 
	The low-altitude design in~\cite{jiang2026lowaltitude} optimizes UAV placement, per-element rotations, and phase shifts for a single blocked BS-to-ground-terminal communication link. None of these designs combines budget-constrained site selection, orientation optimization, and phase-shift configuration so as to guarantee a worst-case signal-to-noise ratio (SNR) over a continuous 3D low-altitude region under a fixed BS downtilt.
	
	Motivated by the above, this paper studies quasi-static IRS deployment for low-altitude coverage under a fixed BS downtilt. 
	All configurations remain fixed during operation across all terminal locations, and the objective maximizes the worst-case SNR over the prescribed 3D target region. The main contributions of this paper are summarized as follows:
	
	\begin{figure}[t!]
		\centering
		\includegraphics[width=3.5 in]{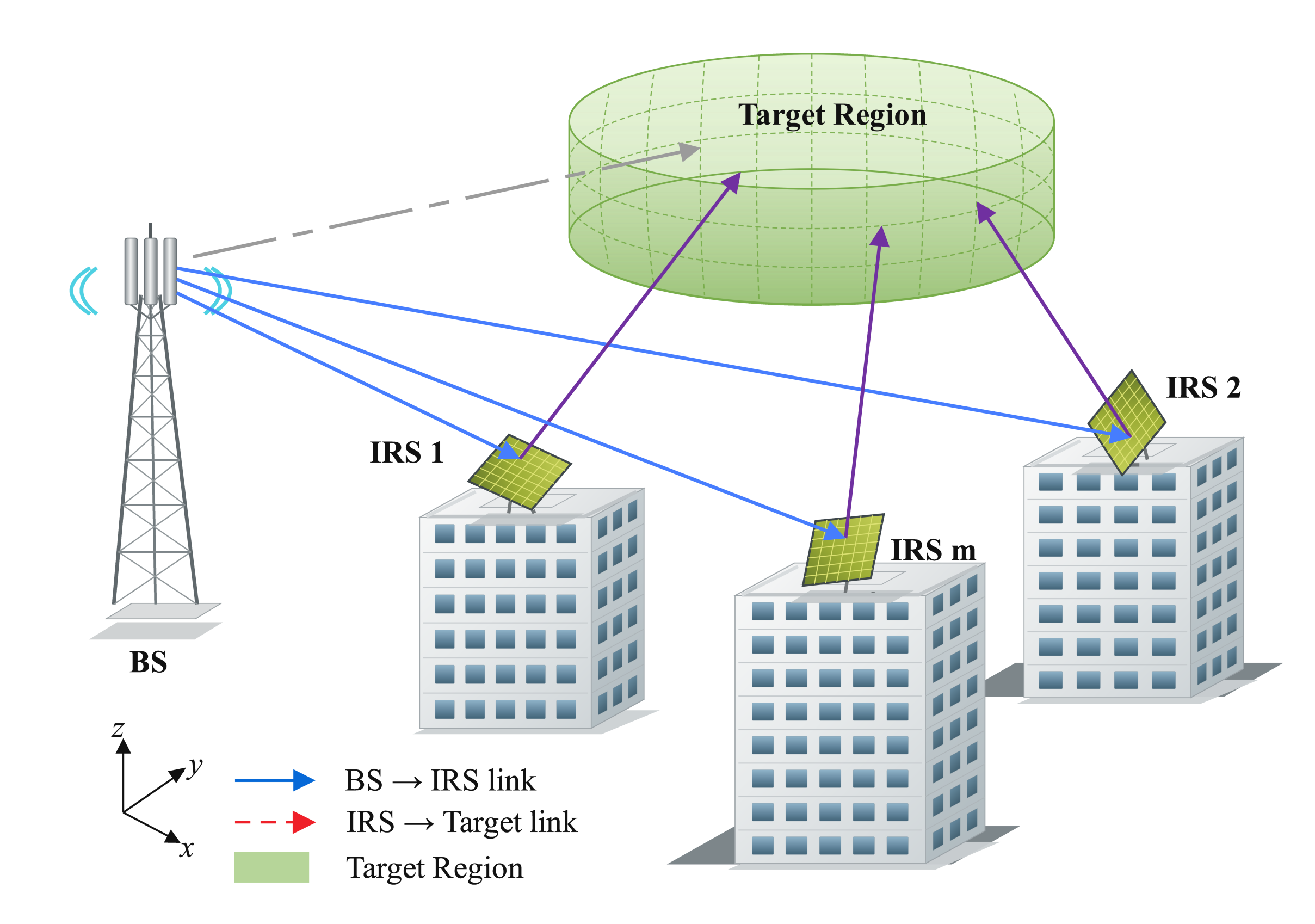}
		\caption{Quasi-static IRS-assisted low-altitude airspace coverage system.}
		\label{fig:scenario}
	\end{figure}
	
	\begin{itemize}
		
	\item We establish a radiation-pattern-aware channel model for 3D low-altitude IRS deployment under an existing fixed-downtilt BS. Based on this model, we formulate a budget-constrained deployment problem, where candidate-site selection, IRS orientations, and phase shifts are jointly optimized to maximize the worst-case SNR. Binary site selection and the nonlinear dependence of the SNR on IRS orientations and phase shifts make the resulting problem mixed-integer and nonconvex.
	
	\item We characterize candidate-site illumination geometry and  directional characteristics under different BS and IRS array
	sizes. The first characterization gives the nonnegative installation-height range satisfying the BS main-lobe condition. The separation between the corresponding height boundaries grows linearly with horizontal BS-to-site distance and decreases inversely with the number of BS antennas. The second gives an analytical lower bound on the maximum achievable worst-case normalized array gain over the IRS reflection directions. The resulting sufficient direction span decreases inversely with the square root of the number of IRS elements under the same worst-case normalized gain guarantee.
	
	\item The continuous-region problem contains infinitely many SNR constraints, precluding direct optimization. We therefore approximate the target region using uniform samples augmented at predicted heights associated with weak direct-link illumination near array-factor nulls. Based on the finite problem, we develop a mixed-integer alternating optimization (AO) algorithm that updates candidate-site selection through a mixed-integer linear program (MILP) and IRS orientations and phase shifts through concave quadratic local models. Continuous-block candidates are evaluated using the SNR objective, and accepted updates generate a nondecreasing convergent sequence of worst-case SNR values.
		
	\item Simulation results validate both analytical characterizations and confirm the convergence of the proposed mixed-integer AO algorithm. The proposed scheme achieves higher worst-case SNR than every considered benchmark across all tested deployment budgets. The comparisons further demonstrate the importance of IRS orientation optimization and phase shift design over the target region while accounting for the IRS element radiation pattern.
		
	\end{itemize}
	
	The rest of this paper is organized as follows. Section~\ref{sec:system_model} presents the system model and max-min SNR formulation, Section~\ref{sec:analysis} analyzes BS and IRS array-size effects on illumination
	geometry and directional characteristics, Section~\ref{sec:alg2} develops the mixed-integer AO algorithm, Section~\ref{sec:results} presents numerical results, and Section~\ref{sec:conclusion} concludes the paper.

	\emph{Notation:} Scalars, vectors, and matrices are denoted by italic, bold lowercase, and bold uppercase letters. $\|\cdot\|$ is the Euclidean norm, $\operatorname{diag}(\cdot)$ a diagonal matrix, $(\cdot)^T$ the transpose, $(\cdot)^*$ the conjugate, and $|\cdot|$ the absolute value of a scalar or the cardinality of a set. $\Re\{\cdot\}$ and $\Im\{\cdot\}$ denote the real and imaginary parts, $j=\sqrt{-1}$ is the imaginary unit, $\angle(\mathbf a,\mathbf b)$ denotes the angle between two vectors, and $\arg z$ denotes the phase of a complex scalar $z$. $\operatorname{col}(\cdot)$ stacks the argument vectors into one column vector, $\mathbf A\preceq\mathbf B$ means $\mathbf B-\mathbf A$ is positive semidefinite, $\operatorname{atan2}(y,x)$ is the two-argument arctangent, and $O(\cdot)$ and $\Theta(\cdot)$ are standard asymptotic notations.

	\section{System Model and Problem Formulation}\label{sec:system_model}
	This section describes the IRS deployment and establishes the channel model for the direct and IRS-assisted links over the target airspace. We then formulate a max-min SNR problem over prescribed 3D locations for joint optimization of site selection, IRS orientations, and phase shifts.

	\subsection{Deployment Configuration}\label{sec:config}
	As shown in Fig.~\ref{fig:scenario}, we consider an IRS-assisted downlink system covering a target low-altitude airspace $\mathcal V\subset\mathbb R^3$ above the building layer. We adopt a Cartesian coordinate system centered at the ground projection of the BS, with the positive $z$-axis upward and the $x$--$y$ plane on the ground. The BS is located at $\mathbf b_0=[0,0,H_B]^T$ and is equipped with a uniform linear array (ULA) of $N_t$ antennas with spacing $\lambda/2$ and fixed electrical downtilt $\theta_{\mathrm{tilt}}$, where $\lambda$ is the carrier wavelength. The BS transmits with power $P_0$, and the receiver noise power is $\sigma^2$. 	
	The area contains buildings $\mathcal B=\{1,\ldots,N_{\mathrm{bld}}\}$, where building $b\in\mathcal B$ is modeled as an axis-aligned cuboid with height $H_b$. Candidate IRS sites $\mathcal M_0=\{1,\ldots,M_0\}$ are located on rooftops with LoS-dominant links to the BS and target airspace. IRS $m$ is centered at $\mathbf q_m=[x_m,y_m,H_{b(m)}+h_{\mathrm{inst},m}]^T$ on building $b(m)$, where $h_{\mathrm{inst},m}$ denotes the installation height above the rooftop. The binary variable $s_m\in\{0,1\}$ indicates site selection subject to $\sum_{m\in\mathcal M_0}s_m\le M_{\max}$, where $M_{\max}$ is the maximum number of deployed IRSs. After deployment planning, the selected sites, panel orientations, and IRS phase shifts remain fixed throughout operation.
	
	For each candidate site $m\in\mathcal M_0$, the associated IRS is modeled as a uniform planar array (UPA) with element set $\mathcal N_m=\{1,\ldots,N_m\}$ and spacing $d_e=\lambda/2$, where $N_m=N_{m,h}N_{m,v}$ and $N_{m,h}$ and $N_{m,v}$ denote the horizontal and vertical element numbers, respectively. The orientation of IRS $m$ is specified by $\boldsymbol{\tau}_m=[\theta_m,\phi_m]^T$, where $\theta_m$ denotes the inclination of the panel normal from the positive $z$-axis and $\phi_m$ denotes the azimuth angle of the panel normal. At candidate site $m$, the feasible orientation set is
	\setlength\abovedisplayskip{1pt}
	\setlength\belowdisplayskip{1pt}
	\begin{equation}
		\!\!	\mathcal T_m
		\!=\!
		\left\{\!
		(\!\theta_m,\phi_m\!)\,\middle|\,
		\theta_m \in [\theta_m^{\min},\theta_m^{\max}],
		\phi_m \in [\phi_m^{\min},\phi_m^{\max}]
		\!\right\}.
		\label{eq:orientation_feasible_set}
	\end{equation}
	where $\theta_m^{\min}$ and $\theta_m^{\max}$ denote the inclination bounds, while $\phi_m^{\min}$ and $\phi_m^{\max}$ denote the azimuth bounds at candidate site $m$.
	The panel normal vector corresponding to $\boldsymbol{\tau}_m$ is
	\begin{equation}
		\mathbf{n}_m(\boldsymbol{\tau}_m)
		=
		[\sin\theta_m\cos\phi_m,\,
		\sin\theta_m\sin\phi_m,\,
		\cos\theta_m]^T .
		\label{eq:panel_normal}
	\end{equation}
	A local-to-global rotation matrix $\mathbf R_m(\boldsymbol{\tau}_m)\in\mathbb R^{3\times3}$ maps the local normal $\hat{\mathbf z}=[0,0,1]^T$ to $\mathbf n_m(\boldsymbol{\tau}_m)$ and is written as
	\begin{equation}
		\!\!	\mathbf R_m(\boldsymbol{\tau}_m)
		\!	=\!
		\begin{bmatrix}
			\cos\theta_m\cos\phi_m & \!-\sin\phi_m &\! \sin\theta_m\cos\phi_m\\
			\cos\theta_m\sin\phi_m &\! \cos\phi_m &\! \sin\theta_m\sin\phi_m\\
			-\sin\theta_m &\! 0 &\! \cos\theta_m
		\end{bmatrix}.
		\label{eq:alg2_rotation_matrix}
	\end{equation}
	The global coordinate of element $n$ on IRS $m$ is
	\begin{equation}
		\mathbf{p}_{m,n}
		=
		\mathbf{q}_m
		+
		\mathbf R_m(\boldsymbol{\tau}_m)\bar{\mathbf p}_{m,n},
		\label{eq:element_global_position}
	\end{equation}
	where $\bar{\mathbf p}_{m,n}$ is the local coordinate of element $n$ relative to the center of IRS $m$. Element $n$ is indexed by the row--column pair $(i_h,i_v)$ through $n=(i_v-1)N_{m,h}+i_h$, with $i_h\in\{1,\ldots,N_{m,h}\}$ and $i_v\in\{1,\ldots,N_{m,v}\}$, and $\bar{\mathbf p}_{m,n}$ satisfies
	\begin{equation}
		\bar{\mathbf p}_{m,n}
		=
		d_e
		\left[
		i_h-\frac{N_{m,h}+1}{2},
		i_v-\frac{N_{m,v}+1}{2},
		0
		\right]^T.
	\end{equation}
	
	\subsection{Propagation Model}
	\label{sec:channel}
 	
	We consider frequency-flat baseband channels for the direct BS-to-location link and the IRS-assisted link. Fig.~\ref{fig:geo} illustrates the direct link and the IRS-assisted link through IRS $m$ for an arbitrary target-airspace location $\mathbf q_u\in\mathcal V$ indexed by $u$. The direct BS-to-location, BS-to-IRS, and IRS-to-location links are modeled in the far field given their propagation distances.
	
	\begin{figure}[t!]
		\centering
		\includegraphics[width=3.5in]{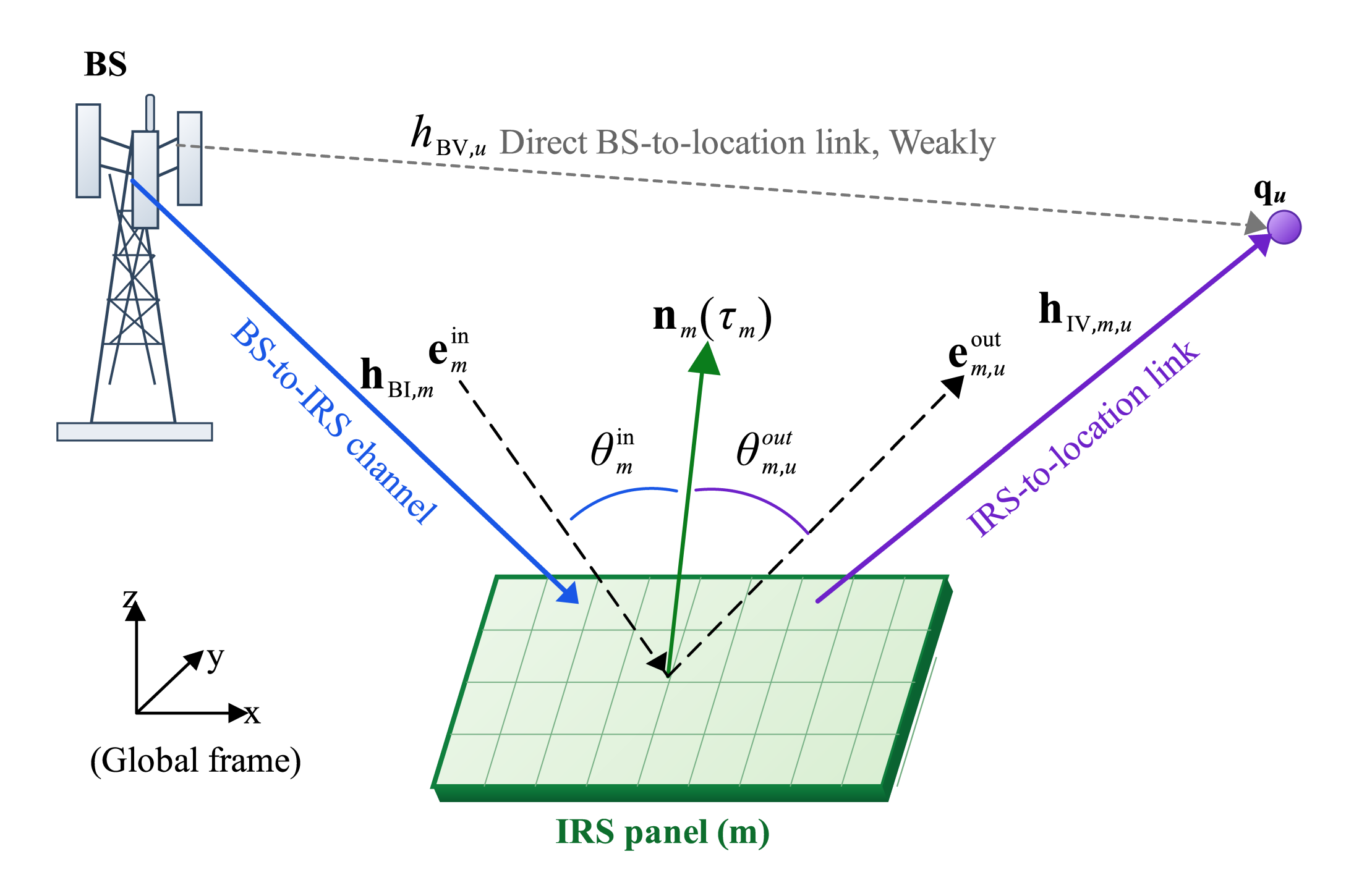}
		\caption{Geometry of the direct and IRS-assisted links.}
		\label{fig:geo}
	\end{figure}
	
	\subsubsection{BS Antenna Pattern and IRS Element Pattern}
	
	The BS antenna gain follows the 3GPP model~\cite{3gpp38901}. 
	For a location $\mathbf r=[x,y,z]^T$, the elevation angle from the BS to $\mathbf r$ is $\theta_{\mathrm B}(\mathbf r)=\operatorname{atan2}(z-H_B,\sqrt{x^2+y^2})$, and the BS antenna gain toward $\mathbf r$ can be expressed as
	\begin{equation}
		G_{\mathrm B}(\mathbf r)
		=
		N_t
		G_{\mathrm e}\bigl(\theta_{\mathrm B}(\mathbf r)\bigr)
		A_{\mathrm F}\bigl(\theta_{\mathrm B}(\mathbf r)\bigr).
		\label{eq:bs_pattern}
	\end{equation}
	Here $N_t$ is the number of BS antennas, $G_{\mathrm e}(\theta)=G_{\mathrm e}^{\max}10^{-\frac{1}{10}\min\{12(\theta/\theta_{\mathrm{3dB}})^2,\mathrm{SLA}_V\}}$ is the 3GPP element gain at elevation angle $\theta$, $G_{\mathrm e}^{\max}$ is the element boresight gain, $\theta_{\mathrm{3dB}}$ is the half-power beamwidth, $\mathrm{SLA}_V$ is the side-lobe attenuation, and $A_{\mathrm F}(\theta)=|F_{\mathrm B}(\theta)|^2$ denotes the array factor. Specifically, $F_{\mathrm B}(\theta)$ can be expressed as
	\begin{equation}
		F_{\mathrm B}(\theta)
		\triangleq
		\frac{1}{N_t}\sum_{n=0}^{N_t-1}e^{j\frac{2\pi}{\lambda}d_v n(\sin\theta+\sin\theta_{\mathrm{tilt}})},
		\label{eq:bs_array_response}
	\end{equation}
	where $d_v=\lambda/2$ is the vertical antenna spacing~\cite{balanis2016antenna}.
	
	The IRS element radiation pattern (ERP) is modeled by a front-side cosine-power directional gain pattern~\cite{jiang2026lowaltitude,balanis2016antenna}. The IRS ERP gain is given by
	\begin{equation}
		G_{\mathrm I}(\omega)
		=
		\begin{cases}
			G_{\mathrm I}^{\max}\cos^{p_I}\omega,
			& 0\le\omega<\frac{\pi}{2},\\
			0,
			& \frac{\pi}{2}\le\omega\le\pi,
		\end{cases}
		\label{eq:irs_erp}
	\end{equation}
	where $\omega$ is measured from the IRS normal $\mathbf n_m(\boldsymbol{\tau}_m)$, $p_I\ge2$ is the directivity exponent, and $G_{\mathrm I}^{\max}=2(p_I+1)$ is the broadside element gain. For an arbitrary airspace location $\mathbf q_u\in\mathcal V$, the incident and outgoing angles of IRS $m$ are defined as
	\begin{align}
		\theta_m^{\mathrm{in}}
		&=
		\angle\bigl(
		\mathbf b_0-\mathbf q_m,\,
		\mathbf n_m(\boldsymbol{\tau}_m)
		\bigr),
		\label{eq:angle_in}\\
		\theta_{m,u}^{\mathrm{out}}
		&=
		\angle\bigl(
		\mathbf q_u-\mathbf q_m,\,
		\mathbf n_m(\boldsymbol{\tau}_m)
		\bigr).
		\label{eq:angle_out}
	\end{align}
	The incident and outgoing ERP gains used in the cascaded channel are $G_{\mathrm I}(\theta_m^{\mathrm{in}})$ and $G_{\mathrm I}(\theta_{m,u}^{\mathrm{out}})$, respectively.
	
	\subsubsection{Channel Model}
	Target airspace and rooftop IRS locations favor LoS-dominant propagation along the direct and IRS-assisted paths. Accordingly, the direct BS-to-location, BS-to-IRS, and IRS-to-location links are modeled as LoS channels. The large-scale attenuation of each link is modeled as~\cite{rappaport2002wireless}
	\begin{equation}
		\!\!	\beta_X
		\! 	=\!
		C_{\mathrm{L}}\!\!
		\left(\frac{d_X}{d_0}\right)^{-\alpha_{\mathrm{L}}}\!\!,
		X \!\!\in\!\! \{(\mathrm{BV},u),  (\mathrm{BI},m), (\mathrm{IV},m,u)\},
		\label{eq:pathloss_compact}\! \!
	\end{equation}
	where $\alpha_{\mathrm{L}}$ is the path-loss exponent, $d_X$ is the corresponding link distance, and $C_{\mathrm{L}}=\left(\frac{\lambda}{4\pi d_0}\right)^2$ is the free-space gain at the reference distance $d_0$.  
	Let $d_{\mathrm{BV},u}=\|\mathbf{q}_u-\mathbf{b}_0\|$ denote the BS-to-$u$ distance, $d_{\mathrm{BI},m}=\|\mathbf{q}_m-\mathbf{b}_0\|$ the BS-to-IRS distance, and $d_{\mathrm{IV},m,u}=\|\mathbf{q}_u-\mathbf{q}_m\|$ the IRS-to-$u$ distance.\footnote{The BS-to-location, BS-to-IRS, and IRS-to-location links satisfy the corresponding Fraunhofer far-field conditions. In particular, the BS-to-IRS and IRS-to-location distances exceed the IRS Fraunhofer distance $d_{\mathrm F,m}=2D_{\mathrm{ap},m}^2/\lambda$, where $D_{\mathrm{ap},m}$ is the aperture diagonal.}
	The direct BS-to-$u$ channel is modeled as
	\begin{equation}
		\!\!\!\!	h_{\mathrm{BV},u}
		\!	=\!
		\sqrt{
			\beta_{\mathrm{BV},u}
			N_t
			G_{\mathrm e}\bigl(\theta_{\mathrm B}(\mathbf q_u)\bigr)
		}
		F_{\mathrm B}\bigl(\theta_{\mathrm B}(\mathbf q_u)\bigr)
		e^{-j\frac{2\pi}{\lambda}d_{\mathrm{BV},u}}.
		\label{eq:direct_channel}\!\!
	\end{equation}
	For the reflected link through the IRS at candidate site $m$, let $\mathbf e_m^{\mathrm{in}}\triangleq(\mathbf b_0-\mathbf q_m)/\|\mathbf b_0-\mathbf q_m\|$ denote the unit vector from IRS $m$ to the BS and $\mathbf e_{m,u}^{\mathrm{out}}\triangleq(\mathbf q_u-\mathbf q_m)/\|\mathbf q_u-\mathbf q_m\|$ the unit vector from IRS $m$ to $u$. The array response of IRS $m$ in the direction $\mathbf e_m^{\mathrm{in}}$ and the array response in the direction $\mathbf e_{m,u}^{\mathrm{out}}$ are
	\begin{align}
		[\mathbf a_m^{\mathrm{in}}]_n
		&=
		\exp\left(
		j\frac{2\pi}{\lambda}
		(\mathbf e_m^{\mathrm{in}})^T
		\mathbf R_m(\boldsymbol{\tau}_m)
		\bar{\mathbf p}_{m,n}
		\right),
		\label{eq:incident_response}\\
		[\mathbf a_{m,u}^{\mathrm{out}}]_n
		&=
		\exp\left(
		j\frac{2\pi}{\lambda}
		(\mathbf e_{m,u}^{\mathrm{out}})^T
		\mathbf R_m(\boldsymbol{\tau}_m)
		\bar{\mathbf p}_{m,n}
		\right).
		\label{eq:departure_response}
	\end{align}
	The BS-to-IRS channel $\mathbf h_{\mathrm{BI},m}$ and the IRS-to-$u$ channel $\mathbf h_{\mathrm{IV},m,u}$ are
	\begin{align}
		\mathbf h_{\mathrm{BI},m}
		&=
		\sqrt{
			\beta_{\mathrm{BI},m}
			N_t
			G_{\mathrm e}\bigl(\theta_{\mathrm B}(\mathbf q_m)\bigr)
			G_{\mathrm I}(\theta_m^{\mathrm{in}})
		}
		\nonumber\\
		&\quad\times
		F_{\mathrm B}\bigl(\theta_{\mathrm B}(\mathbf q_m)\bigr)
		e^{-j\frac{2\pi}{\lambda}d_{\mathrm{BI},m}}
		\mathbf a_m^{\mathrm{in}},
		\label{eq:incident_channel}\\
		\mathbf h_{\mathrm{IV},m,u}
		&=
		\sqrt{
			\beta_{\mathrm{IV},m,u}
			G_{\mathrm I}(\theta_{m,u}^{\mathrm{out}})
		}
		e^{-j\frac{2\pi}{\lambda}d_{\mathrm{IV},m,u}}
		\mathbf a_{m,u}^{\mathrm{out}}.
		\label{eq:departure_channel}
	\end{align}
	For each candidate site $m\in\mathcal M_0$, let 
	\begin{equation}
		\boldsymbol{\Phi}_m=\mathrm{diag}(e^{j\varphi_{m,1}},\ldots,e^{j\varphi_{m,N_m}}),
	\end{equation}
	where $\boldsymbol{\varphi}_m=[\varphi_{m,1},\ldots,\varphi_{m,N_m}]^T$ denotes the IRS phase shifts of IRS $m$ and $\varphi_{m,n}\in[0,2\pi)$ is the phase shift of the $n$-th reflecting element under unit reflection amplitude. The cascaded BS-IRS-$u$ channel through candidate $m$ is
	\begin{equation}
		h_{m,u}(\boldsymbol{\tau}_m,\boldsymbol{\varphi}_m)
		=
		\mathbf h_{\mathrm{IV},m,u}^T
		\boldsymbol{\Phi}_m
		\mathbf h_{\mathrm{BI},m}.
		\label{eq:cascaded_channel}
	\end{equation}
	Substituting~\eqref{eq:incident_channel} and~\eqref{eq:departure_channel} into~\eqref{eq:cascaded_channel}, the cascaded channel can be expressed as
	\begin{equation}
			h_{m,u}(\boldsymbol{\tau}_m,\boldsymbol{\varphi}_m)
		=
		\zeta_{m,u}(\boldsymbol{\tau}_m)
		\,
		e^{-j\psi_{m,u}}
		\,
		(\mathbf a_{m,u}^{\mathrm{out}})^T
		\boldsymbol{\Phi}_m
		\mathbf a_m^{\mathrm{in}} ,
		\label{eq:cascaded_channel_expand}
	\end{equation}
	where
	\begin{align}
		\zeta_{m,u}(\boldsymbol{\tau}_m)
		&\triangleq
		(
		\Lambda_{m,u}
		G_{\mathrm I}(\theta_m^{\mathrm{in}})
		G_{\mathrm I}(\theta_{m,u}^{\mathrm{out}}))^{1/2}, \\
		\psi_{m,u}
		&\triangleq
		\frac{2\pi}{\lambda}
		\left(
		d_{\mathrm{BI},m}+d_{\mathrm{IV},m,u}
		\right)
		-
		\vartheta_{\mathrm B,m},
	\end{align}	and $\vartheta_{\mathrm B,m}\triangleq\arg F_{\mathrm B}\bigl(\theta_{\mathrm B}(\mathbf q_m)\bigr)$ denotes the phase of the BS array response toward candidate site $m$, with $\Lambda_{m,u}\triangleq\beta_{\mathrm{BI},m}\beta_{\mathrm{IV},m,u}G_{\mathrm B}(\mathbf q_m)$.
	
	Define $\mathbf s\triangleq[s_1,\ldots,s_{M_0}]^T$, $\boldsymbol{\tau}\triangleq\{\boldsymbol{\tau}_m\}_{m\in\mathcal M_0}$, and $\boldsymbol{\varphi}\triangleq\{\boldsymbol{\varphi}_m\}_{m\in\mathcal M_0}$. The received signal at location $u$ is
	\begin{equation}
	\!\!	y_u
	\!	=\!
		\sqrt{P_0}
		\left(
		h_{\mathrm{BV},u}
	\!	+\!
				\sum\nolimits_{m\in\mathcal M_0}
		s_m h_{m,u}(\boldsymbol{\tau}_m,\boldsymbol{\varphi}_m)
		\right)x
	\!	+\!
		n_u,\!\!
		\label{eq:received_signal}
	\end{equation}
	where $x$ is the transmitted symbol with $\mathbb E[|x|^2]=1$, and $n_u\sim\mathcal{CN}(0,\sigma^2)$ denotes the additive noise at point $u$. The resulting SNR is
	\begin{equation}
	\!\!\!\!	\gamma_u(\mathbf s,\boldsymbol{\tau},\boldsymbol{\varphi})
	\!	=\!
		\frac{P_0}{\sigma^2}
		\left|
		h_{\mathrm{BV},u}
		\!\!+\!\!
			\sum\nolimits_{m\in\mathcal M_0}
	s_m h_{m,u}(\boldsymbol{\tau}_m,\boldsymbol{\varphi}_m)
		\right|^2.
		\label{eq:snr}\!\! 
	\end{equation}
	
	\subsection{Problem Formulation}\label{sec:formulation}
	Since the target airspace $\mathcal V$ contains infinitely many locations, we approximate the continuous-region minimum SNR over a finite location set for numerical optimization. 
	We first uniformly sample $\mathcal V$ at intervals of $\Delta_v$ to generate the basic locations. The uniform grid is augmented with locations of weak direct-link power near the nulls of the fixed-downtilt BS array factor. The array factor in~\eqref{eq:bs_array_response} identifies these locations. With $d_v=\lambda/2$, the nulls satisfy $\nu=2\xi/N_t$, where $\nu=\sin\theta+\sin\theta_{\mathrm{tilt}}$ and $\xi$ is an integer with $\xi\bmod N_t\neq0$. The corresponding null elevation angles are $\theta_\xi=\arcsin(2\xi/N_t-\sin\theta_{\mathrm{tilt}})$ for $\left|2\xi/N_t-\sin\theta_{\mathrm{tilt}}\right|\leq1$.
	At horizontal distance $\rho$ from the BS, the corresponding null height is $z_\xi(\rho)=H_B+\rho\tan\theta_\xi$.
	At each basic horizontal coordinate, all valid null heights $z_\xi(\rho)$ are included with repeated locations counted once. Let $\mathcal Q_{\mathrm{uni}}$ and $\mathcal Q_{\mathrm{null}}$ denote the uniformly sampled and null-height locations, respectively. The resulting finite location set is
	\begin{equation}
		\mathcal Q
		\triangleq
		\mathcal Q_{\mathrm{uni}}
		\cup
		\mathcal Q_{\mathrm{null}}
		\subset\mathcal V.
		\label{eq:finite_location_set}
	\end{equation}
	
	For the location set $\mathcal Q$, the minimum SNR is defined as
	\begin{equation}
		\gamma_{\min}
		(\mathbf s,\boldsymbol{\tau},\boldsymbol{\varphi})
		\triangleq
		\min_{\mathbf q_u\in\mathcal Q}
		\gamma_u
		(\mathbf s,\boldsymbol{\tau},\boldsymbol{\varphi}).
		\label{eq:minimum_snr_coverage_set}
	\end{equation}
	Introducing an auxiliary variable $\Gamma$, the joint rooftop deployment and IRS configuration problem is formulated as
	\begin{subequations}
		\label{prob:maxmin_snr_voxel}
		\begin{align}
			\textrm{(P1)}\ 
			\max_{\Gamma,\mathbf s,\boldsymbol{\tau},
				\boldsymbol{\varphi}}
			\ 
			& \Gamma
			\label{prob:maxmin_snr_obj}
			\\
			\mathrm{s.t.}\ 
			& \gamma_u
			(\mathbf s,\boldsymbol{\tau},\boldsymbol{\varphi})
			\geq\Gamma,
			\ 
			\forall \mathbf q_u\in\mathcal Q,
			\label{prob:maxmin_snr_voxel_region}
			\\
			& \sum_{m\in\mathcal M_0}s_m
			\leq M_{\max},
			\label{prob:maxmin_raw_budget}
			\\
			& s_m\in\{0,1\},
			\ 
			\forall m\in\mathcal M_0,
			\label{prob:maxmin_raw_binary}
			\\
			& \boldsymbol{\tau}_m\in\mathcal T_m,
			\ 
			\forall m\in\mathcal M_0,
			\label{prob:maxmin_raw_orientation}
			\\
			& 0\leq\varphi_{m,n}<2\pi,
			\ 
			\forall m\in\mathcal M_0,
			\ 
			\forall n\in\mathcal N_m.
			\label{prob:maxmin_raw_phase}
		\end{align}
	\end{subequations}
	Constraint~\eqref{prob:maxmin_snr_voxel_region} imposes the common SNR lower bound $\Gamma$ over $\mathcal Q$. Constraints~\eqref{prob:maxmin_raw_budget}--\eqref{prob:maxmin_raw_phase} specify the deployment budget, binary site selection, IRS orientations, and phase shifts. Problem~\textrm{(P1)} is mixed-integer and non-convex because the deployment variables are binary and the SNR in~\eqref{eq:snr} couples the site selection with the orientation- and phase-dependent cascaded channels.
	
	\begin{remark}
		The sampling interval $\Delta_v$ controls the spatial resolution of the finite location set and the number of SNR constraints. Let $\mathcal Q_{\mathrm{uni}}$ denote the centers of the Cartesian cells with side length $\Delta_v$ used to partition $\mathcal V$. The distance from any location in $\mathcal V$ to the nearest uniformly sampled location is at most $\sqrt{3}\Delta_v/2$, and the same bound holds for $\mathcal Q$ since $\mathcal Q_{\mathrm{uni}}\subseteq\mathcal Q$. For two horizontal positions $\mathbf p$ and $\mathbf p'$, the corresponding null heights satisfy $|z_\xi(\|\mathbf p\|)-z_\xi(\|\mathbf p'\|)|\leq|\tan\theta_\xi|\|\mathbf p-\mathbf p'\|$. Since every valid null height is inserted at each sampled horizontal coordinate, an interior point on the null-height surface indexed by $\xi$ lies within $\Delta_v/(\sqrt{2}|\cos\theta_\xi|)$ of an augmented location.
		For a fixed BS array, the uniform samples scale as $O(\Delta_v^{-3})$, whereas the null-height augmentation adds $O(\Delta_v^{-2})$ locations. The augmented locations capture the predicted weak direct-link regions, and both spatial approximation bounds vanish as $\Delta_v\to0$. Reducing $\Delta_v$ therefore improves both approximations, while the uniform grid remains the dominant source of sample growth.
	\end{remark}

	\section{BS and IRS Array-Size Analysis}\label{sec:analysis}
	The cascaded-link power in~\eqref{eq:cascaded_channel_expand} contains the BS-to-IRS illumination factor $\beta_{\mathrm{BI},m}G_{\mathrm B}(\mathbf q_m)$ and the IRS array-gain term $\left|(\mathbf a_{m,u}^{\mathrm{out}})^T\boldsymbol{\Phi}_m\mathbf a_m^{\mathrm{in}}\right|^2$. 
	
	\subsection{BS Illumination at Rooftop Candidates}
	\label{subsec:bs_illumination}
	We first characterize the BS main-lobe elevation interval and map the angular boundaries to the installation-height range of each rooftop candidate. For a prescribed array-factor loss threshold $\eta_B>0$, let $[\theta_{\mathrm{low}},\theta_{\mathrm{high}}]$ denote the connected elevation-angle interval containing $-\theta_{\mathrm{tilt}}$ and satisfying $10\log_{10}A_{\mathrm F}(\theta)\geq-\eta_B$. The component containing $-\theta_{\mathrm{tilt}}$ identifies the main-lobe interval among connected intervals satisfying the prescribed loss threshold.
	With $d_v=\lambda/2$, substituting $\nu=\sin\theta+\sin\theta_{\mathrm{tilt}}$ into~\eqref{eq:bs_array_response} and evaluating the geometric sum gives $\widetilde A_{\mathrm F}(\nu)\triangleq\left|\frac{\sin(N_t\pi\nu/2)}{N_t\sin(\pi\nu/2)}\right|^2$, where $\widetilde A_{\mathrm F}(0)=1$.
	The prescribed loss threshold is equivalent to $\widetilde A_{\mathrm F}(\nu)\geq10^{-\eta_B/10}$. Let $\nu_{\eta_B}\in(0,2/N_t)$ denote the unique positive main-lobe boundary satisfying $\widetilde A_{\mathrm F}(\nu_{\eta_B})=10^{-\eta_B/10}$. The symmetry of $\widetilde A_{\mathrm F}(\nu)$ around $\nu=0$ gives the main-lobe interval $\nu\in[-\nu_{\eta_B},\nu_{\eta_B}]$.
	When $\nu_{\eta_B}<1-\sin\theta_{\mathrm{tilt}}$, the corresponding elevation-angle boundaries are
	$\theta_{\mathrm{low}} = \arcsin(-\sin\theta_{\mathrm{tilt}}-\nu_{\eta_B})$
	and $\theta_{\mathrm{high}} = \arcsin(-\sin\theta_{\mathrm{tilt}}+\nu_{\eta_B})$.
	
	For candidate site $m\in\mathcal M_0$, define the horizontal distance from the BS as $\rho_m\triangleq\sqrt{x_m^2+y_m^2}$ and the BS elevation angle as $\theta_{\mathrm B,m}\triangleq\theta_{\mathrm B}(\mathbf q_m)$, which gives the panel-center height $q_{m,z}=H_{b(m)}+h_{\mathrm{inst},m}=H_B+\rho_m\tan\theta_{\mathrm B,m}$.
	The main-lobe condition $\theta_{\mathrm B,m}\in[\theta_{\mathrm{low}},\theta_{\mathrm{high}}]$ therefore requires the panel-center height $H_{b(m)}+h_{\mathrm{inst},m}$ to lie between $H_B+\rho_m\tan\theta_{\mathrm{low}}$ and $H_B+\rho_m\tan\theta_{\mathrm{high}}$.
	Subtracting $H_{b(m)}$ from the two panel-center height boundaries gives the installation-height boundaries $h_m^{\mathrm{low}}=H_B+\rho_m\tan\theta_{\mathrm{low}}-H_{b(m)}$ and $h_m^{\mathrm{high}}=H_B+\rho_m\tan\theta_{\mathrm{high}}-H_{b(m)}$ at candidate site $m$.
	The nonnegative installation-height range $\mathcal H_m$ satisfying the BS main-lobe condition is empty when $h_m^{\mathrm{high}}<0$ and equals $\mathcal H_m=[\max\{h_m^{\mathrm{low}},0\},h_m^{\mathrm{high}}]$ when $h_m^{\mathrm{high}}\geq0$.
	Candidates with $\mathcal H_m\neq\varnothing$ form the retained set $\mathcal M\subseteq\mathcal M_0$ for subsequent site selection and IRS configuration. Let $M\triangleq|\mathcal M|$ denote the number of retained candidates.
	For candidate site $m$, the separation between the two panel-center height boundaries corresponding to the prescribed array-factor loss threshold is
	\begin{equation}
		W_{\eta_B}(\rho_m)
		=
		\rho_m
		\left(
		\tan\theta_{\mathrm{high}}
		-
		\tan\theta_{\mathrm{low}}
		\right).
		\label{eq:mainlobe_band_width}
	\end{equation}
	The building height shifts both installation-height boundaries equally and affects whether $\mathcal H_m$ is empty while $W_{\eta_B}(\rho_m)$ remains independent of $H_{b(m)}$ and linear in $\rho_m$. The dependence on $N_t$ enters through the main-lobe boundary $\nu_{\eta_B}$ and is characterized by the large-array result below. 
	For the prescribed loss threshold, let $\mu_{\eta_B}\in(0,2)$ denote the unique solution of $\left[\sin(\pi\mu/2)/(\pi\mu/2)\right]^2=10^{-\eta_B/10}$. The following proposition characterizes the large-array scaling of the height-boundary separation $W_{\eta_B}(\rho_m)$ with respect to $N_t$. 
	
	\begin{proposition}
		\label{prop:mainlobe_band}
		For fixed $\eta_B$ and $\theta_{\mathrm{tilt}}$ with $\rho_m>0$, the height-boundary separation $W_{\eta_B}(\rho_m)$ defined in~\eqref{eq:mainlobe_band_width} follows the large-array asymptotic relation
		\begin{equation}
			W_{\eta_B}(\rho_m)
			\underset{N_t\to\infty}{\sim}
			\frac{2\mu_{\eta_B}\rho_m}
			{N_t\cos^3\theta_{\mathrm{tilt}}}.
			\label{eq:mainlobe_band_width_scaling}
		\end{equation}
		The relation yields $W_{\eta_B}(\rho_m)=\Theta(\rho_m/N_t)$.
	\end{proposition}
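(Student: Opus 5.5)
We need the asymptotics of $\nu_{\eta_B}$ in $N_t$ and a first-order expansion of $\tan\theta_{\mathrm{high}}-\tan\theta_{\mathrm{low}}$ around $\theta=-\theta_{\mathrm{tilt}}$.

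\emph{Step 1: $N_t\nu_{\eta_B}\to\mu_{\eta_B}$.} Substitute $\nu=\mu/N_t$, so that
\begin{equation*}
\widetilde A_{\mathrm F}(\mu/N_t)=\left[\frac{\sin(\pi\mu/2)}{N_t\sin(\pi\mu/(2N_t))}\right]^2 \triangleq g_{N_t}(\mu).
\end{equation*}
For $\mu\in[0,2-\delta]$, $N_t\sin(\pi\mu/(2N_t))\to\pi\mu/2$ uniformly, hence $g_{N_t}\to g(\mu)=[\sin(\pi\mu/2)/(\pi\mu/2)]^2$ uniformly on that interval. The limit $g$ is strictly decreasing on $(0,2)$ with $g(0)=1$ and $g(2)=0$, and $\mu_{\eta_B}$ is its unique crossing of $c\triangleq10^{-\eta_B/10}<1$. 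Pick $\epsilon>0$ small with $\mu_{\eta_B}+\epsilon<2$. Then $g(\mu_{\eta_B}-\epsilon)>c>g(\mu_{\eta_B}+\epsilon)$ with strict margins, and by uniform convergence the same inequalities hold for $g_{N_t}$ when $N_t$ is large. Uniqueness of the main-lobe crossing $\mu=N_t\nu_{\eta_B}\in(0,2)$, as stated before the proposition, then forces $|N_t\nu_{\eta_B}-\mu_{\eta_B}|<\epsilon$. Since $\epsilon$ is arbitrary, $\nu_{\eta_B}=\mu_{\eta_B}/N_t+o(1/N_t)$. In particular $\nu_{\eta_B}\to0$, so the condition $\nu_{\eta_B}<1-\sin\theta_{\mathrm{tilt}}$ eventually holds whenever $\theta_{\mathrm{tilt}}<\pi/2$, and the boundary formulas for $\theta_{\mathrm{low}}$ and $\theta_{\mathrm{high}}$ apply.

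\emph{Step 2: derivative of $\tan$ as a function of $\sin$.} Write $f(s)=\tan(\arcsin s)=s/\sqrt{1-s^2}$. Then $\tan\theta_{\mathrm{high}}-\tan\theta_{\mathrm{low}}=f(s_0+\nu_{\eta_B})-f(s_0-\nu_{\eta_B})$ with $s_0=-\sin\theta_{\mathrm{tilt}}$. Differentiating,
\begin{equation*}
f'(s)=(1-s^2)^{-3/2},\qquad f'(s_0)=1/\cos^3\theta_{\mathrm{tilt}}.
\end{equation*}
Because $f$ is $C^1$ near $s_0\in(-1,1)$, the mean value theorem gives
\begin{equation*}
f(s_0+\nu)-f(s_0-\nu)=2\nu f'(\xi_\nu),\qquad \xi_\nu\in(s_0-\nu,s_0+\nu),
\end{equation*}
and $\xi_\nu\to s_0$ as $\nu\to0$.

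\emph{Step 3: combine.} Multiplying by $\rho_m>0$,
\begin{equation*}
W_{\eta_B}(\rho_m)=2\rho_m\nu_{\eta_B}f'(\xi_{\nu_{\eta_B}}).
\end{equation*}
Step 1 gives $N_t\nu_{\eta_B}\to\mu_{\eta_B}$ and Step 2 gives $f'(\xi_{\nu_{\eta_B}})\to1/\cos^3\theta_{\mathrm{tilt}}$. Hence the ratio of $W_{\eta_B}(\rho_m)$ to $2\mu_{\eta_B}\rho_m/(N_t\cos^3\theta_{\mathrm{tilt}})$ tends to $1$, which is the claimed relation. The $\Theta(\rho_m/N_t)$ statement follows immediately, since $\mu_{\eta_B}$ and $\cos\theta_{\mathrm{tilt}}$ are fixed positive constants.

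\emph{Main obstacle.} The delicate point is Step 1. Pointwise convergence of $g_{N_t}$ alone does not locate the root; we need uniform convergence on a compact subinterval away from $\mu=2$, the first null, where the denominator behavior matters, together with the uniqueness of the main-lobe boundary. Also, $\mu_{\eta_B}$ must lie strictly inside $(0,2)$, which holds because $0<c<1$.
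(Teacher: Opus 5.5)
Your proof is correct and follows essentially the same route as the paper: rescale $\mu=N_t\nu$, compare with the limiting $[\sin(\pi\mu/2)/(\pi\mu/2)]^2$ uniformly near $\mu_{\eta_B}$, bracket the root using strict monotonicity and the uniqueness of the main-lobe boundary, and then expand $\tan(\arcsin s)$ around $s=-\sin\theta_{\mathrm{tilt}}$, where its derivative is $\cos^{-3}\theta_{\mathrm{tilt}}$. The one difference is that the paper also uses the nonvanishing derivative of the limiting sinc-squared function at $\mu_{\eta_B}$ and a central Taylor expansion to obtain the sharper remainder $O(\rho_m N_t^{-3})$; your mean-value-theorem argument gives only the ratio limit, but that is all the $\sim$ and $\Theta(\rho_m/N_t)$ claims require.
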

		
	\begin{proof}
		Set $\mu_{\eta_B,N_t}\triangleq N_t\nu_{\eta_B}\in(0,2)$ and define  $\widehat A_{N_t}(\mu)\triangleq \left| \sin(\pi\mu/2)/
		\left[N_t\sin\left(\pi\mu/(2N_t)\right)\right] \right|^2$ and
		$\widehat A_{\infty}(\mu)\triangleq \left[\sin(\pi\mu/2)/(\pi\mu/2)\right]^2$.
		The finite-array boundary $\mu_{\eta_B,N_t}$ and limiting boundary $\mu_{\eta_B}$ satisfy
		$\widehat A_{N_t}(\mu_{\eta_B,N_t}) = \widehat A_{\infty}(\mu_{\eta_B}) = 10^{-\eta_B/10}$.
		For $\mu$ in a closed neighborhood of $\mu_{\eta_B}$, the expansion
		$N_t\sin\left(\pi\mu/(2N_t)\right) = \pi\mu/2+O(N_t^{-2})$ 	gives
		$\widehat A_{N_t}(\mu) 	= 	\widehat A_{\infty}(\mu)+O(N_t^{-2})$
		uniformly as $N_t\to\infty$. Choose any sufficiently small $\epsilon>0$ such that $[\mu_{\eta_B}-\epsilon,\mu_{\eta_B}+\epsilon]\subset(0,2)$ and recall that $\widehat A_{\infty}(\mu)$ is strictly decreasing on $(0,2)$. The strict decrease gives $\widehat A_{\infty}(\mu_{\eta_B}-\epsilon)>10^{-\eta_B/10}>\widehat A_{\infty}(\mu_{\eta_B}+\epsilon)$ and the uniform approximation preserves both inequalities for sufficiently large $N_t$. The uniqueness of the finite-array boundary then gives $\mu_{\eta_B,N_t}\to\mu_{\eta_B}$ as $N_t\to\infty$.
		
		The uniform approximation further gives 
		$\widehat A_{\infty}(\mu_{\eta_B,N_t}) - \widehat A_{\infty}(\mu_{\eta_B}) 	= O(N_t^{-2})$.
		Since $\widehat A_{\infty}'(\mu_{\eta_B})\neq0$, the mean value theorem yields
		$\mu_{\eta_B,N_t} 	= \mu_{\eta_B}+O(N_t^{-2})$ and hence $\nu_{\eta_B}  = 	\mu_{\eta_B}/N_t+O(N_t^{-3})$.	
		To map the boundary expansion to the height separation, define
		$g_{\mathrm h}(\nu) \triangleq \tan[\arcsin(\nu-\sin\theta_{\mathrm{tilt}})]$.
		The elevation-angle boundaries and~\eqref{eq:mainlobe_band_width} then give
		$W_{\eta_B}(\rho_m) = \rho_m[ g_{\mathrm h}(\nu_{\eta_B})  - g_{\mathrm h}(-\nu_{\eta_B})]$.
		Since $g_{\mathrm h}'(0)=\cos^{-3}\theta_{\mathrm{tilt}}$, the central Taylor expansion gives
		$g_{\mathrm h}(\nu)-g_{\mathrm h}(-\nu) = 2\nu\cos^{-3}\theta_{\mathrm{tilt}}+O(\nu^3)$.
		Substituting $\nu_{\eta_B} = \mu_{\eta_B}/N_t+O(N_t^{-3})$ gives
		$W_{\eta_B}(\rho_m) = \frac{2\mu_{\eta_B}\rho_m} {N_t\cos^3\theta_{\mathrm{tilt}}} + O(\rho_mN_t^{-3})$,
		which establishes~\eqref{eq:mainlobe_band_width_scaling} and $W_{\eta_B}(\rho_m)=\Theta(\rho_m/N_t)$. 
	\end{proof}

	\begin{figure}[t!]
		\centering
		\includegraphics[width=3.3 in]{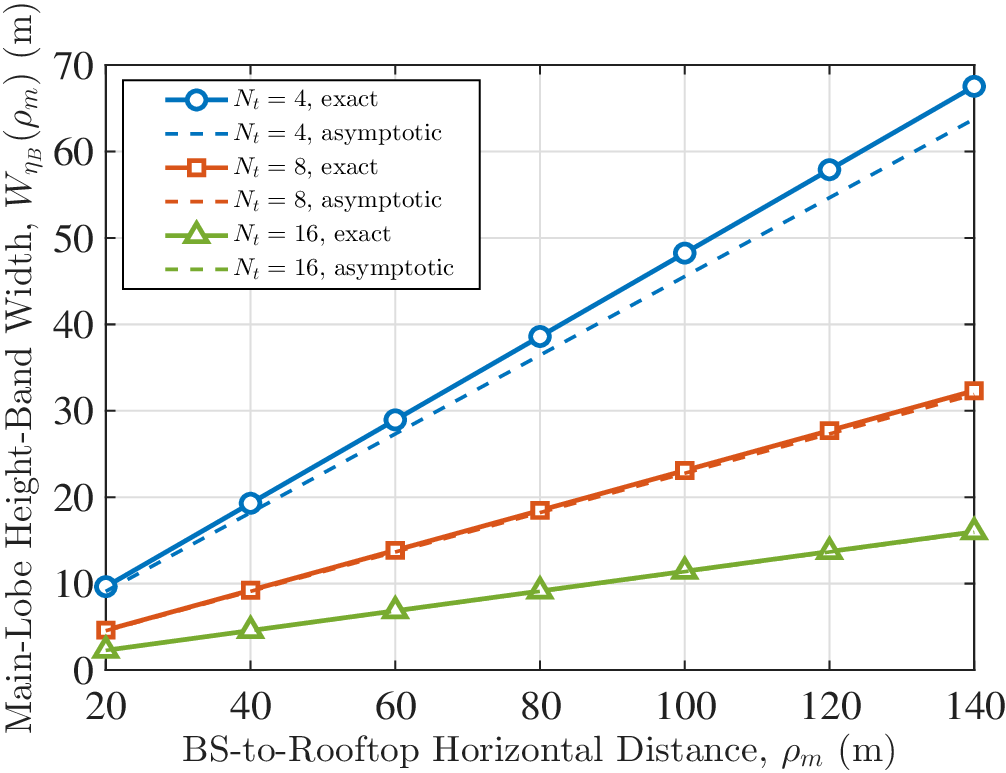}
		\caption{Main-lobe height-boundary separation versus horizontal distance.}
		\label{fig:mainlobe_band_validation}
	\end{figure}
	Proposition~\ref{prop:mainlobe_band} shows that the main-lobe height-boundary separation grows linearly with $\rho_m$ and decreases at an inverse-$N_t$ rate as the BS array grows. The two height boundaries therefore converge toward the downtilt-aligned panel-center height as $N_t$ increases.
	For each retained candidate $m\in\mathcal M$, the BS downtilt direction gives the panel-center height $z_{\mathrm c}(\rho_m)=H_B-\rho_m\tan\theta_{\mathrm{tilt}}$ and the corresponding rooftop installation height $h_{\mathrm c,m}=z_{\mathrm c}(\rho_m)-H_{b(m)}$. Since $\theta_{\mathrm{low}}<-\theta_{\mathrm{tilt}}<\theta_{\mathrm{high}}$ and the tangent function is increasing over the considered elevation range, the resulting heights satisfy $h_m^{\mathrm{low}}<h_{\mathrm c,m}<h_m^{\mathrm{high}}$. The nonnegative installation height in $\mathcal H_m$ closest to the BS downtilt direction is therefore
	\begin{equation}
		h_{\mathrm{inst},m}
		=
		\max\left\{h_{\mathrm c,m},0\right\}.
		\label{eq:height_projection_rule}
	\end{equation}
	After the installation height is fixed, the BS-to-IRS illumination of each retained candidate depends on the propagation loss and the BS radiation gain at the resulting position. Using $d_{\mathrm{BI},m}=\rho_m/\cos\theta_{\mathrm B,m}$ and the LoS path-loss model in~\eqref{eq:pathloss_compact}, the illumination factor $I_m\triangleq\beta_{\mathrm{BI},m}G_{\mathrm B}(\mathbf q_m)$ is
	\begin{equation}
		\begin{aligned}
				\!\!	I_m
			\!\!=\!\!
			C_{\mathrm L}
			\left(\! 
			\frac{\rho_m}{d_0}
			\! \right)^{-\alpha_{\mathrm L}}\!\!
			\left(  
			\cos\theta_{\mathrm B,m}
			  \right)^{\alpha_{\mathrm L}} \!\!
			N_tG_{\mathrm e}(\theta_{\mathrm B,m})
			A_{\mathrm F}(\theta_{\mathrm B,m}).\!\!
		\end{aligned}
		\label{eq:irs_incident_factor_decomp}
	\end{equation}
	For a fixed $\theta_{\mathrm B,m}$, increasing $\rho_m$ enlarges the height-boundary separation linearly while reducing $I_m$ through the BS-to-IRS path-loss factor $\rho_m^{-\alpha_{\mathrm L}}$. Rooftop candidates with a wider main-lobe height band can therefore receive weaker BS illumination when their horizontal distances increase. The main-lobe height range $\mathcal H_m$ determines the retained set $\mathcal M$. For subsequent optimization, the candidate-indexed variables, summations, and constraints in problem~\textrm{(P1)} are restricted from $\mathcal M_0$ to $\mathcal M$ while retaining the complete SNR contributions of the retained candidates. 
	Fig.~\ref{fig:mainlobe_band_validation} compares the exact height-boundary separation in~\eqref{eq:mainlobe_band_width} with the large-array approximation in~\eqref{eq:mainlobe_band_width_scaling} for $N_t\in\{4,8,16\}$. With $\eta_B=3$ dB and $\theta_{\mathrm{tilt}}=8^\circ$, doubling $N_t$ approximately halves the slope of $W_{\eta_B}(\rho_m)$ with respect to $\rho_m$. The relative slope error decreases from $5.60\%$ at $N_t=4$ to $1.37\%$ at $N_t=8$ and $0.34\%$ at $N_t=16$.
	
	\subsection{Directional Characteristics of a Candidate IRS}
	\label{subsec:regional_array_gain}
	For a fixed IRS orientation $\boldsymbol{\tau}_m$, let $\widetilde{\mathbf p}_{m,n}\triangleq\mathbf p_{m,n}-\mathbf q_m$
	denote the position of element $n$ relative to the panel center. The reflected array response associated with a unit outgoing direction $\mathbf e$ is defined as
	\begin{equation}
		\!\!
		A_m(\mathbf e,\boldsymbol{\varphi}_m)
		\!\triangleq\!
		\sum_{n\in\mathcal N_m}
		\exp\Bigg(
		j\varphi_{m,n}
		+
		j\frac{2\pi}{\lambda}
		(\mathbf e_m^{\mathrm{in}}+\mathbf e)^T
		\widetilde{\mathbf p}_{m,n}
		\Bigg).
		\!\!
		\label{eq:array_term_element}
	\end{equation}
	For the outgoing direction $\mathbf e_{m,u}^{\mathrm{out}}$ toward location $u$, the response in~\eqref{eq:array_term_element} reduces to $(\mathbf a_{m,u}^{\mathrm{out}})^T \boldsymbol{\Phi}_m\mathbf a_m^{\mathrm{in}}$
	in~\eqref{eq:cascaded_channel_expand}. Since every element contribution has unit magnitude, the array response satisfies
	$|A_m(\mathbf e,\boldsymbol{\varphi}_m)|^2\leq N_m^2$ for every outgoing direction. A target region generally contains multiple outgoing directions, while one IRS phase vector determines the responses toward all these directions. We therefore let $\mathcal E_m$ denote the set of outgoing directions for the considered portion of the target region and define the corresponding normalized worst-case array gain as $\kappa_m(\boldsymbol{\varphi}_m) \triangleq N_m^{-2}\min_{\mathbf e\in\mathcal E_m} |A_m(\mathbf e,\boldsymbol{\varphi}_m)|^2$. When $\mathcal E_m$ represents the entire target region,
	$\mathcal E_m$ contains the outgoing directions toward all sampled locations and is written as $\mathcal E_m= \{\mathbf e_{m,u}^{\mathrm{out}} \mid\mathbf q_u\in\mathcal Q\}$. Optimizing the IRS phases over this direction set gives the maximum
	achievable normalized worst-case array gain $\kappa_m^{\max}\triangleq \max_{\boldsymbol{\varphi}_m} \kappa_m(\boldsymbol{\varphi}_m)$.
	
	The value of $\kappa_m^{\max}$ depends on the angular separation among the outgoing directions in $\mathcal E_m$ and reflects how widely the target region is viewed from candidate site $m$.
	We measure this separation by twice the radius of the smallest spherical cap containing $\mathcal E_m$ and define
	\begin{equation}
		D_m
		\triangleq
		2\min_{\|\mathbf e_0\|=1}
		\max_{\mathbf e\in\mathcal E_m}
		\angle\left(
		\mathbf e_0,
		\mathbf e
		\right),
		\label{eq:regional_angular_extent}
	\end{equation}
	where $\mathbf e_0$ denotes a candidate unit center direction. The following proposition gives an analytical lower bound on the maximum achievable normalized worst-case array gain over $\mathcal E_m$ in terms of $D_m$ and $N_m$. 
	
	\begin{proposition}
		\label{prop:regional_coherence}
		For a square UPA with $N_{m,h}=N_{m,v}>1$, the maximum achievable normalized worst-case array gain over
		$\mathcal E_m$ satisfies
		\begin{equation}
			\kappa_m^{\max}
			\geq
			\max\left\{
			1-
			\frac{\pi^2(N_m-1)}{6}
			\sin^2\left(\frac{D_m}{4}\right),
			0
			\right\}^2.
			\label{eq:regional_coherence_bound}
		\end{equation}
	\end{proposition}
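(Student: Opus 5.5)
The plan is to exhibit one explicit phase vector and lower-bound its worst-case response; since $\kappa_m^{\max}$ is a maximum over $\boldsymbol{\varphi}_m$, any feasible choice gives a lower bound. Let $\mathbf e_0$ attain the minimum in~\eqref{eq:regional_angular_extent}. It exists by compactness of the unit sphere and continuity of the max of angles. Then every $\mathbf e\in\mathcal E_m$ satisfies $\angle(\mathbf e_0,\mathbf e)\le D_m/2$. I would choose the phases that steer the beam coherently toward $\mathbf e_0$:
\begin{equation*}
\varphi_{m,n}=-\frac{2\pi}{\lambda}(\mathbf e_m^{\mathrm{in}}+\mathbf e_0)^T\widetilde{\mathbf p}_{m,n}\pmod{2\pi}.
\end{equation*}
Substituting into~\eqref{eq:array_term_element} gives
\begin{equation*}
A_m(\mathbf e,\boldsymbol{\varphi}_m)=\sum_{n}e^{jx_n},\qquad x_n\triangleq\frac{2\pi}{\lambda}(\mathbf e-\mathbf e_0)^T\widetilde{\mathbf p}_{m,n}.
\end{equation*}

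Next I would lower-bound the modulus by its real part and use $\cos x\ge 1-x^2/2$:
\begin{equation*}
\frac{|A_m|}{N_m}\ge\frac{1}{N_m}\sum_n\cos x_n\ge 1-\frac{1}{2N_m}\sum_n x_n^2 .
\end{equation*}
The quadratic sum equals $(2\pi/\lambda)^2(\mathbf e-\mathbf e_0)^T\mathbf S(\mathbf e-\mathbf e_0)$, where $\mathbf S\triangleq\sum_n\widetilde{\mathbf p}_{m,n}\widetilde{\mathbf p}_{m,n}^T$. Because $\widetilde{\mathbf p}_{m,n}=\mathbf R_m\bar{\mathbf p}_{m,n}$ with $\mathbf R_m$ orthogonal and the local grid centered, we get $\mathbf S=\mathbf R_m\operatorname{diag}(S_h,S_v,0)\mathbf R_m^T$. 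Here
\begin{equation*}
S_h=d_e^2N_{m,v}\sum_{i}\Bigl(i-\tfrac{N_{m,h}+1}{2}\Bigr)^2=d_e^2N_m\frac{N_{m,h}^2-1}{12}.
\end{equation*}
Since the cross terms vanish by symmetry of the centered indices, and the UPA is square so that $N_{m,h}^2=N_m$, we have $S_h=S_v=d_e^2N_m(N_m-1)/12$. Hence $\mathbf S\preceq d_e^2N_m(N_m-1)/12\cdot\mathbf I$. This is the only place the square assumption is used: it makes $\mathbf S$ isotropic on the panel plane, so no knowledge of the direction of $\mathbf e-\mathbf e_0$ is needed.

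Using the chord length $\|\mathbf e-\mathbf e_0\|=2\sin(\angle(\mathbf e,\mathbf e_0)/2)\le 2\sin(D_m/4)$, which holds because the angle is at most $D_m/2\le\pi$ and $\sin$ is increasing on $[0,\pi/2]$, and $d_e=\lambda/2$, the bound becomes
\begin{equation*}
\frac{1}{2N_m}\sum_n x_n^2\le\frac{1}{2}\pi^2\frac{N_m-1}{12}\cdot4\sin^2(D_m/4)=\frac{\pi^2(N_m-1)}{6}\sin^2(D_m/4).
\end{equation*}
This holds uniformly in $\mathbf e\in\mathcal E_m$. Combining with $|A_m|/N_m\ge0$ yields $|A_m|/N_m\ge\max\{1-\frac{\pi^2(N_m-1)}{6}\sin^2(D_m/4),0\}$. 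Squaring, taking the minimum over $\mathcal E_m$, and then maximizing over phases gives~\eqref{eq:regional_coherence_bound}.

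The main obstacle is the second-moment computation and its uniformity: one must verify that the rotated second-moment matrix is dominated by a multiple of the identity independent of the orientation $\boldsymbol{\tau}_m$, and that the chord-length bound correctly converts the angular cap radius into $\sin(D_m/4)$.
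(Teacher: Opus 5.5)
Your proposal is correct and follows essentially the same route as the paper. Both arguments steer the phases coherently toward the minimizing cap center, bound $|A_m|$ below by its real part using $\cos x\ge 1-x^2/2$, use the isotropic second moment $\lambda^2N_m(N_m-1)/48$ of the square UPA, and convert the cap radius into the chord bound $\|\mathbf e-\mathbf e_0\|\le 2\sin(D_m/4)$. Your explicit computation $\mathbf S=\mathbf R_m\operatorname{diag}(S_h,S_v,0)\mathbf R_m^T\preceq \frac{d_e^2N_m(N_m-1)}{12}\mathbf I$ spells out a step that the paper states in one line.
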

	
	\begin{proof}
		Let $\mathbf e_m^{\mathrm{ref}}$ denote a minimizing center in~\eqref{eq:regional_angular_extent}, and choose
		$\widetilde{\varphi}_{m,n} =-\frac{2\pi}{\lambda} (\mathbf e_m^{\mathrm{in}}+\mathbf e_m^{\mathrm{ref}})^T
		\widetilde{\mathbf p}_{m,n}$ for every $n\in\mathcal N_m$, with each phase taken modulo $2\pi$. For any $\mathbf e\in\mathcal E_m$, substituting these phases into~\eqref{eq:array_term_element} gives $A_m(\mathbf e,\widetilde{\boldsymbol{\varphi}}_m)  =\sum_{n\in\mathcal N_m}\exp\bigl(j\Delta_{m,n}(\mathbf e)\bigr)$,
		where $\Delta_{m,n}(\mathbf e)  =\frac{2\pi}{\lambda} (\mathbf e-\mathbf e_m^{\mathrm{ref}})^T
		\widetilde{\mathbf p}_{m,n}$. For every $\mathbf e\in\mathcal E_m$,~\eqref{eq:regional_angular_extent} implies
		$\angle(\mathbf e_m^{\mathrm{ref}},\mathbf e)\leq D_m/2$ and hence $\|\mathbf e-\mathbf e_m^{\mathrm{ref}}\|^2
		\leq4\sin^2(D_m/4)$. For the square UPA with $N_{m,h}=N_{m,v}$ and $d_e=\lambda/2$, the centered element coordinates in~\eqref{eq:element_global_position} have equal second moments along the two panel axes and zero cross moment. Their projection onto any vector therefore satisfies $\sum_{n\in\mathcal N_m} [(\mathbf e-\mathbf e_m^{\mathrm{ref}})^T
		\widetilde{\mathbf p}_{m,n}]^2 	\leq \frac{\lambda^2N_m(N_m-1)}{48} \|\mathbf e-\mathbf e_m^{\mathrm{ref}}\|^2$.
		Combining this relation with the preceding direction bound gives
		\begin{equation}
			\sum_{n\in\mathcal N_m}
			\Delta_{m,n}^2(\mathbf e)
			\leq
			\frac{\pi^2N_m(N_m-1)}{3}
			\sin^2\left(\frac{D_m}{4}\right).
			\label{eq:regional_phase_energy_bound}
		\end{equation}
		The response satisfies $\Re\{A_m(\mathbf e,\widetilde{\boldsymbol{\varphi}}_m)\} =\sum_{n\in\mathcal N_m}\cos\Delta_{m,n}(\mathbf e)$. Applying $\cos x\geq1-x^2/2$ and~\eqref{eq:regional_phase_energy_bound}, followed by
		$|A_m|\geq\max\{\Re\{A_m\},0\}$, gives $\kappa_m(\widetilde{\boldsymbol{\varphi}}_m) \geq \max\{1-\frac{\pi^2(N_m-1)}{6}
		\sin^2(D_m/4),0\}^2$. Since $\kappa_m^{\max}$ is the maximum of $\kappa_m(\boldsymbol{\varphi}_m)$ over all feasible IRS phases, $\kappa_m^{\max}\geq \kappa_m(\widetilde{\boldsymbol{\varphi}}_m)$,  which  proves~\eqref{eq:regional_coherence_bound}.
	\end{proof}
	
	\begin{figure}[t!]
		\centering
		\includegraphics[width=3.3in]{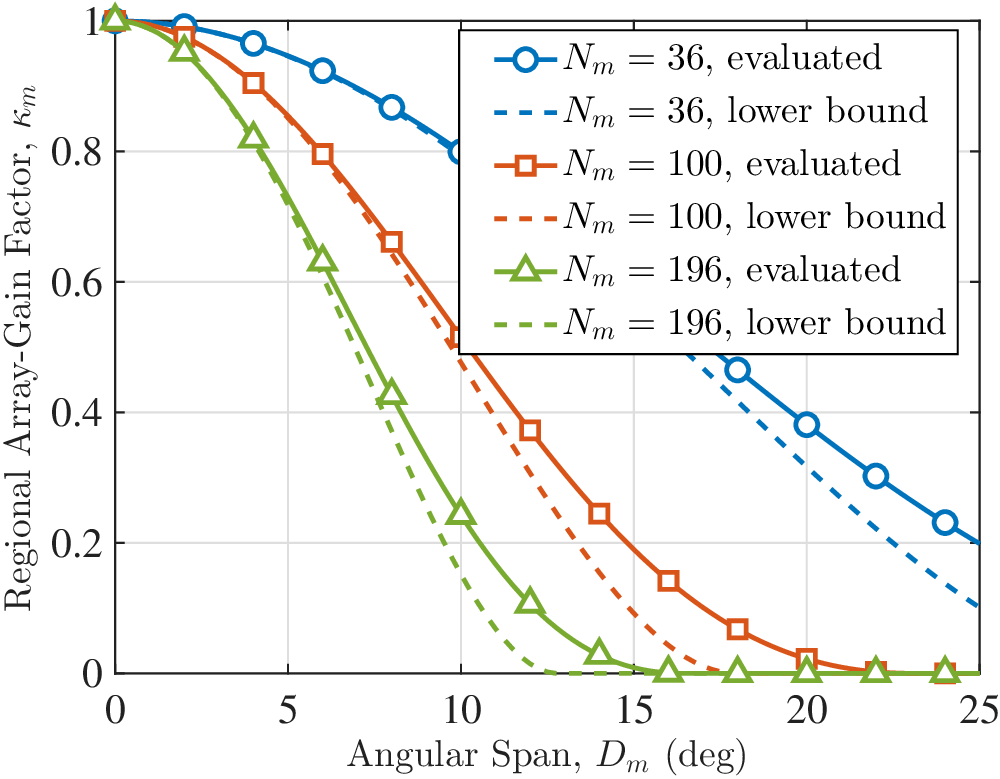}
		\caption{Worst-case normalized array gain of the constructive phase design and analytical lower bound versus $D_m$.}
		\label{fig:coherence}
	\end{figure}
	
	Proposition~\ref{prop:regional_coherence} establishes an analytical lower bound on the maximum achievable normalized worst-case array gain after optimizing one common IRS phase vector over $\mathcal E_m$. The lower bound depends on $N_m$ and $D_m$ through the joint factor $(N_m-1)\sin^2(D_m/4)$ under a fixed IRS orientation. Requiring the lower bound in~\eqref{eq:regional_coherence_bound} to be no smaller than a specified worst-case normalized gain $\bar{\kappa}\in(0,1)$ gives the sufficient direction-span condition
	\begin{equation}
		D_m
		\leq
		4\arcsin
		\sqrt{
			\frac{
				6\left(1-\sqrt{\bar{\kappa}}\right)
			}{
				\pi^2(N_m-1)
			}
		}.
		\label{eq:regional_span_guarantee}
	\end{equation}
	The right-hand side of~\eqref{eq:regional_span_guarantee} decreases monotonically with $N_m$ and gives the maximum direction span allowed by the condition for the specified $\bar{\kappa}$. Using $\arcsin x\sim x$ for small $x$, the sufficient direction span in~\eqref{eq:regional_span_guarantee} scales as $\Theta(N_m^{-1/2})$ for fixed $\bar{\kappa}$. Quadrupling $N_m$ therefore approximately halves the sufficient direction span in the large-array regime under the same worst-case normalized gain requirement. The equivalent condition $(N_m-1)\sin^2(D_m/4)\leq6(1-\sqrt{\bar{\kappa}})/\pi^2$ identifies the array-size and direction-span combinations satisfying the same gain guarantee.
	Fig.~\ref{fig:coherence} compares the worst-case normalized gain of the feasible phase vector used in the proof with the analytical lower bound for the three array sizes. The evaluated curves remain above the lower bounds across the direction spans and confirm the relation in~\eqref{eq:regional_coherence_bound}. At $D_m=12^\circ$ the evaluated gain decreases from $0.7220$ for $N_m=36$ to $0.1064$ for $N_m=196$. The corresponding lower bound decreases from $0.7095$ to $0.0147$ over the array-size range and remains below the evaluated gain. The growing gap with $N_m$ indicates increasing conservatism of the analytical lower bound as the directional phase variation across the array increases.
			
	\section{Alternating Optimization Algorithm}
	\label{sec:alg2}
	
	The optimization is performed over the retained candidate set $\mathcal M$, and each installation height follows~\eqref{eq:height_projection_rule}. For IRS $m$, let $\mathbf v_m\triangleq[e^{j\varphi_{m,1}},\ldots,e^{j\varphi_{m,N_m}}]^T$ denote the unit-modulus reflection vector used in the following alternating updates with $\boldsymbol{\Phi}_m=\operatorname{diag}(\mathbf v_m)$. We propose a mixed-integer AO algorithm for problem~\textrm{(P1)} by decomposing the deployment variables and IRS orientations together with the reflection vectors into three blocks. At outer iteration $k$, $\mathbf s^{(k)}$, $\{\boldsymbol{\tau}_m^{(k)}\}$, and $\{\mathbf v_m^{(k)}\}$ denote the deployment vector, IRS orientations, and reflection vectors, respectively.

	\subsection{Block Updates of the Proposed AO Algorithm}
	\label{subsec:alg2_block_updates}
	\textit{1) Optimizing $\mathbf s$ for Given $\{\boldsymbol{\tau}_m\}$ and $\{\mathbf v_m\}$:}
	With $\{\boldsymbol{\tau}_m^{(k)}\}$ and $\{\mathbf v_m^{(k)}\}$ fixed, the deployment subproblem optimizes $\mathbf s$ and $\Gamma$ subject to~\eqref{prob:maxmin_snr_voxel_region} together with~\eqref{prob:maxmin_raw_budget} and~\eqref{prob:maxmin_raw_binary}. 
	Let $\boldsymbol{\varphi}_m^{(k)}$ denote the phase vector corresponding to $\mathbf v_m^{(k)}$ through $[\mathbf v_m^{(k)}]_n=e^{j\varphi_{m,n}^{(k)}}$ for every $n\in\mathcal N_m$ and let $h_{m,u}^{(k)}\triangleq h_{m,u}(\boldsymbol{\tau}_m^{(k)},\boldsymbol{\varphi}_m^{(k)})$ denote the cascaded channel from candidate site $m$ to location $u$. The identity $s_m^2=s_m$ for each binary deployment variable allows the SNR in~\eqref{eq:snr} at location $\mathbf q_u$ to be written as
	\begin{equation}
		\begin{aligned}
			\gamma_u(\mathbf s)
			=
			\frac{P_0}{\sigma^2}
			\left|
			h_{\mathrm{BV},u}
			+
			\sum\nolimits_{m\in\mathcal M}
			s_m h_{m,u}^{(k)}
			\right|^2 
			&=\frac{P_0}{\sigma^2} \mathcal F_u.
			\label{eq:alg2_coh_expand}
		\end{aligned}
	\end{equation}
	Expanding the squared magnitude in~\eqref{eq:alg2_coh_expand} gives $\mathcal F_u=|h_{\mathrm{BV},u}|^2+\sum_{m\in\mathcal M}a_{m,u}s_m+2\sum_{\substack{m,\ell\in\mathcal M\\m<\ell}}b_{m\ell,u}s_ms_\ell$ for every location $\mathbf q_u\in\mathcal Q$. The coefficient of $s_m$ for each $m\in\mathcal M$ is $a_{m,u}\triangleq|h_{m,u}^{(k)}|^2+2\Re\{h_{\mathrm{BV},u}^*h_{m,u}^{(k)}\}$. The coefficient of $s_ms_\ell$ for every $m,\ell\in\mathcal M$ with $m<\ell$ is $b_{m\ell,u}\triangleq\Re\{(h_{m,u}^{(k)})^*h_{\ell,u}^{(k)}\}$.
	We introduce the auxiliary variable $w_{m\ell}=s_ms_\ell$ for every $m,\ell\in\mathcal M$ with $m<\ell$ to replace each pairwise product in $\mathcal F_u$. Since $s_m$ and $s_\ell$ are binary, each pairwise product for $m,\ell\in\mathcal M$ with $m<\ell$ is represented exactly by
	\begin{equation}
		w_{m\ell}\le s_m,
		w_{m\ell}\le s_\ell,
		w_{m\ell}\ge s_m+s_\ell-1,
		w_{m\ell}\ge 0.
		\label{eq:alg2_product_linearization}
	\end{equation}
	Replacing each product $s_ms_\ell$ in $\mathcal F_u$ with $w_{m\ell}$ gives the linear expression $\widetilde{\mathcal F}_u=|h_{\mathrm{BV},u}|^2+\sum_{m\in\mathcal M}a_{m,u}s_m+2\sum_{\substack{m,\ell\in\mathcal M\\m<\ell}}b_{m\ell,u}w_{m\ell}$ for every $\mathbf q_u\in\mathcal Q$. The auxiliary variables form $\mathbf w\triangleq\{w_{m\ell}\}_{m<\ell}$ and the deployment subproblem under the fixed orientations and reflection coefficients is formulated as
	\begin{subequations}
		\label{prob:alg2_s_sub_full}
		\begin{align}
			\!\! 	\textrm{(P2)} \  
			\max_{\mathbf s,\,\mathbf w,\,\Gamma}\ 
			&\  \Gamma
			\label{prob:alg2_s_sub_full_obj}
			\\
			\mathrm{s.t.}\quad
			&
			\frac{P_0}{\sigma^2}\widetilde{\mathcal F}_u
			\ge
			\Gamma,
			\forall \mathbf q_u\in\mathcal Q,
			\label{prob:alg2_s_sub_full_snr}
			\\
			&
			\eqref{eq:alg2_product_linearization},
			\eqref{prob:maxmin_raw_budget},
			\eqref{prob:maxmin_raw_binary}.
			\label{prob:alg2_s_sub_full_inherit}
		\end{align}
	\end{subequations}
	Problem~\textrm{(P2)} is an MILP with $|\mathcal M|$ binary deployment variables, $\binom{|\mathcal M|}{2}$ auxiliary product variables and $\Gamma$. The binary constraints in~\eqref{prob:maxmin_raw_binary} and the linear constraints in~\eqref{eq:alg2_product_linearization} jointly enforce $w_{m\ell}=s_ms_\ell$ at every feasible point and give $\widetilde{\mathcal F}_u=\mathcal F_u$. Problem~\textrm{(P2)} is therefore an exact MILP reformulation of the deployment subproblem for the fixed IRS orientations and reflection coefficients at outer iteration $k$. Solving problem~\textrm{(P2)} to zero optimality gap by branch-and-bound returns a globally optimal deployment update $\mathbf s^{(k+1)}$ for the current fixed blocks.
	
	\textit{2) Optimizing $\{\boldsymbol{\tau}_m\}$ for Given $\mathbf s$ and $\{\mathbf v_m\}$:}
	With $\mathbf s^{(k+1)}$ and $\{\mathbf v_m^{(k)}\}$ fixed, the orientation block optimizes the selected IRS orientations subject to~\eqref{prob:maxmin_raw_orientation} and the SNR constraints in~\textrm{(P1)}. Let $\mathcal S^{(k+1)}\triangleq\{m\in\mathcal M\mid s_m^{(k+1)}=1\}$ denote the selected IRS set and let $\boldsymbol{\tau}\triangleq\{\boldsymbol{\tau}_m\}_{m\in\mathcal S^{(k+1)}}$ collect the corresponding orientations. Under the fixed deployment and reflection vectors, $\gamma_u(\boldsymbol{\tau})$ denotes the SNR in~\eqref{eq:snr} as a function of the selected IRS orientations and the orientation subproblem at outer iteration $k$ is formulated as
	\begin{subequations}
		\label{eq:alg2_tau_subprob}
		\begin{align}
			\textrm{(P3)}\quad
			\max_{\{\boldsymbol{\tau}_m\}_{m\in\mathcal S^{(k+1)}},\,\Gamma}
			\quad
			& \Gamma
			\label{eq:alg2_tau_subprob_obj}
			\\
			\mathrm{s.t.}\quad
			& \gamma_u(\boldsymbol{\tau})
			\ge
			\Gamma,
			\quad \forall \mathbf q_u\in\mathcal Q,
			\label{eq:alg2_tau_subprob_snr}
			\\
			& \boldsymbol{\tau}_m\in\mathcal T_m,
			\quad \forall m\in\mathcal S^{(k+1)}.
			\label{eq:alg2_tau_subprob_box}
		\end{align}
	\end{subequations}
	The SNR constraints in~\textrm{(P3)} are non-convex in $\{\boldsymbol{\tau}_m\}$ due to the orientation-dependent element gains and array responses in~\eqref{eq:cascaded_channel_expand}. We apply successive convex approximation (SCA) and construct a quadratic local model of each SNR function at $\boldsymbol{\tau}^{(k)}$ to generate an orientation candidate. 
	For each $m\in\mathcal S^{(k+1)}$, define the orientation increment $\mathbf d_m^{\boldsymbol{\tau}}\triangleq\boldsymbol{\tau}_m-\boldsymbol{\tau}_m^{(k)}$ relative to the current outer iterate. Stack these orientation increments as $\mathbf d^{\boldsymbol{\tau}}\triangleq\operatorname{col}\bigl(\{\mathbf d_m^{\boldsymbol{\tau}}\}_{m\in\mathcal S^{(k+1)}}\bigr)$ in increasing order of the candidate index $m$ during the current update. Using these increments, we construct the following concave quadratic local model for the SNR at location $\mathbf q_u$ around $\boldsymbol{\tau}^{(k)}$ as
	\begin{equation}
		\begin{split}
			\!\!\!\!	q_{u}^{\boldsymbol{\tau}}(
			\boldsymbol{\tau}\mid
			\boldsymbol{\tau}^{(k)})
			= & \ 
			\gamma_u(\boldsymbol{\tau}^{(k)})
			\!+\!\!\!\!\!
			\sum_{m\in\mathcal S^{(k+1)}}\!\!\!\!
			\left.
			\nabla_{\boldsymbol{\tau}_m}
			\gamma_u(\boldsymbol{\tau})
			\right|_{\boldsymbol{\tau}=\boldsymbol{\tau}^{(k)}}^T
			\mathbf d_m^{\boldsymbol{\tau}}
			\\
			&\!-\!
			\frac{L_{u}^\tau}{2}
			\sum_{m\in\mathcal S^{(k+1)}}
			\bigl\|
			\mathbf d_m^{\boldsymbol{\tau}}
			\bigr\|^2 ,
		\end{split}
		\label{eq:tau_local_model}
	\end{equation}
	where $L_u^\tau>0$ is a curvature parameter adjusted by backtracking. 
	With $\tau_{m,i}\in\{\theta_m,\phi_m\}$, the gradient in~\eqref{eq:tau_local_model} is evaluated componentwise by differentiating~\eqref{eq:snr} as
	\begin{equation}
		\begin{split}
		\!\!	\left[
			\left.
			\nabla_{\boldsymbol{\tau}_m}
			\gamma_u(\boldsymbol{\tau})
			\right|_{\boldsymbol{\tau}=\boldsymbol{\tau}^{(k)}}
			\right]_i
			&\!=\!
			\left.
			\frac{2P_0}{\sigma^2}
			\Re
			\left\{
			(\bar h_u^{(k)})^*
			\frac{\partial h_{m,u}}
			{\partial\tau_{m,i}}
			\right\}
			\right|_{\boldsymbol{\tau}=\boldsymbol{\tau}^{(k)}}.
		\end{split}\!\!
		\label{eq:tau_gradient_stack}
	\end{equation}
	where $\bar h_u^{(k)}\triangleq h_{\mathrm{BV},u}+\sum_{\ell\in\mathcal S^{(k+1)}}h_{\ell,u}(\boldsymbol{\tau}_\ell^{(k)},\boldsymbol{\varphi}_\ell^{(k)})$ is the total channel before the orientation update under the current deployment and reflection vectors.
	To compute $\partial h_{m,u}/\partial\tau_{m,i}$, write the cascaded channel for candidate site $m$ and location $u$ under the fixed reflection vector $\mathbf v_m^{(k)}$ as
	\begin{equation}
	h_{m,u}
	=
	\zeta_{m,u}
	e^{-j\psi_{m,u}}
	f_{m,u},
	\label{eq:tau_cascaded_decomp}
	\end{equation}
	where
	$f_{m,u}
	=
	\sum_{n\in\mathcal N_m}
	[\mathbf v_m^{(k)}]_n
	[\mathbf a_{m,u}^{\mathrm{out}}]_n
	[\mathbf a_m^{\mathrm{in}}]_n$.
	The propagation phase $\psi_{m,u}$ is independent of $\boldsymbol{\tau}_m$ because all node positions remain fixed during the orientation update. 
	Substituting~\eqref{eq:tau_cascaded_decomp} into~\eqref{eq:tau_gradient_stack} gives~\eqref{eq:tau_grad_from_channel}, shown at the top of the next page. 
	\begin{figure*}
		\begin{equation}
			\begin{split}
				\left.
				\frac{\partial\gamma_u}
				{\partial\tau_{m,i}}
				\right|_{\boldsymbol{\tau}=\boldsymbol{\tau}^{(k)}}
				=
				\left.
				\frac{2P_0}{\sigma^2}
				\Re
				\left\{
				(\bar h_u^{(k)})^*
				e^{-j\psi_{m,u}}
				\left(
				\frac{\partial\zeta_{m,u}}
				{\partial\tau_{m,i}}
				f_{m,u}
				+
				\zeta_{m,u}
				\frac{\partial f_{m,u}}
				{\partial\tau_{m,i}}
				\right)
				\right\}
				\right|_{\boldsymbol{\tau}=\boldsymbol{\tau}^{(k)}} .
			\end{split}
			\label{eq:tau_grad_from_channel}
		\end{equation}
	\end{figure*}
	In~\eqref{eq:tau_grad_from_channel}, $\partial\zeta_{m,u}/\partial\tau_{m,i}$ follows from the ERP gains, while $\partial f_{m,u}/\partial\tau_{m,i}$ follows from the array responses. When the incident and outgoing directions lie in the front half-space of IRS $m$, $\partial\zeta_{m,u}/\partial\tau_{m,i}$ is computed by
	\begin{equation}
		\begin{split}
			\frac{\partial\zeta_{m,u}}
			{\partial\tau_{m,i}}
			=
			\frac{p_I\zeta_{m,u}}{2}
			\bigg(
			&
			\frac{
				\left(
				\frac{\partial\mathbf n_m}
				{\partial\tau_{m,i}}
				\right)^T
				\mathbf e_m^{\mathrm{in}}
			}
			{\cos\theta_m^{\mathrm{in}}}+
			\frac{
				\left(
				\frac{\partial\mathbf n_m}
				{\partial\tau_{m,i}}
				\right)^T
				\mathbf e_{m,u}^{\mathrm{out}}
			}
			{\cos\theta_{m,u}^{\mathrm{out}}}
			\bigg),
		\end{split}
		\label{eq:tau_grad_rho}
	\end{equation}
	where 
	\begin{equation}
		\!\!	\frac{\partial\mathbf n_m}{\partial\theta_m}
		=
		\begin{bmatrix}
			\cos\theta_m\cos\phi_m\\
			\cos\theta_m\sin\phi_m\\
			-\sin\theta_m
		\end{bmatrix}, 
		\frac{\partial\mathbf n_m}{\partial\phi_m}
		=
		\begin{bmatrix}
			-\sin\theta_m\sin\phi_m\\
			\sin\theta_m\cos\phi_m\\
			0
		\end{bmatrix}.\!\!
		\label{eq:alg2_grad_normal}
	\end{equation}
	Because $\zeta_{m,u}\propto(\cos\theta_m^{\mathrm{in}})^{p_I/2}(\cos\theta_{m,u}^{\mathrm{out}})^{p_I/2}$, each term in~\eqref{eq:tau_grad_rho} that contains division by a cosine remains bounded in the front half-space for $p_I\ge2$. If either direction lies in the back half-space, $\zeta_{m,u}$ is locally zero and the derivative with respect to $\tau_{m,i}$ is zero. At the front-side boundary, we assign a zero derivative to $\zeta_{m,u}$ when constructing the quadratic local model for the orientation update. The remaining array-response derivative is given in~\eqref{eq:alg2_grad_f_branch}, shown at the top of the next page,
	\begin{figure*}
		\begin{equation}
			\begin{split}
			\!\! 	\frac{\partial f_{m,u}}
				{\partial\tau_{m,i}}
				\!=\!\!\!
				\sum_{n\in\mathcal N_m}\!
				[\mathbf v_m^{(k)}]_n
				\!\left(\!
				[\mathbf a_m^{\mathrm{in}}]_n
				\frac{\partial[\mathbf a_{m,u}^{\mathrm{out}}]_n}
				{\partial\tau_{m,i}}
				\!+\!
				[\mathbf a_{m,u}^{\mathrm{out}}]_n
				\frac{\partial[\mathbf a_m^{\mathrm{in}}]_n}
				{\partial\tau_{m,i}}
				\!\right)\!
				\!=\!
				j\frac{2\pi}{\lambda}
				\!\!\sum_{n\in\mathcal N_m}\!
				[\mathbf v_m^{(k)}]_n
				[\mathbf a_{m,u}^{\mathrm{out}}]_n
				[\mathbf a_m^{\mathrm{in}}]_n
				\left(
				\mathbf e_m^{\mathrm{in}}
				\!+\!
				\mathbf e_{m,u}^{\mathrm{out}}
				\right)^T
				\frac{\partial\mathbf R_m(\boldsymbol{\tau}_m)}
				{\partial\tau_{m,i}}
				\bar{\mathbf p}_{m,n}.\!
			\end{split}
			\label{eq:alg2_grad_f_branch}
		\end{equation}
	\end{figure*}
	where $ \frac{\partial\mathbf p_{m,n}} {\partial\tau_{m,i}} =
	\frac{\partial\mathbf R_m(\boldsymbol{\tau}_m)}{\partial\tau_{m,i}} \bar{\mathbf p}_{m,n},$ 
	and
	\begin{align}
		\!\!\!\!	\frac{\partial\mathbf R_m}
		{\partial\theta_m}
		&\!=\!
		\begin{bmatrix}
			-\cos\phi_m\sin\theta_m & 0 & \cos\phi_m\cos\theta_m\\
			-\sin\phi_m\sin\theta_m & 0 & \sin\phi_m\cos\theta_m\\
			-\cos\theta_m & 0 & -\sin\theta_m
		\end{bmatrix},
		\label{eq:alg2_grad_R_theta}
		\\
		\!\!\!\!	\frac{\partial\mathbf R_m}
		{\partial\phi_m}
		&\!=\!
		\begin{bmatrix}
			-\sin\phi_m\cos\theta_m \!\!& -\cos\phi_m \!\!& -\sin\phi_m\sin\theta_m\\
			\cos\phi_m\cos\theta_m \!\!& -\sin\phi_m \!\!& \cos\phi_m\sin\theta_m\\
			0\! \!& 0 \!\!& 0
		\end{bmatrix}.\!\!
		\label{eq:alg2_grad_R_phi}
	\end{align}
	Substituting~\eqref{eq:tau_grad_rho} and~\eqref{eq:alg2_grad_f_branch} into~\eqref{eq:tau_grad_from_channel} gives the SNR gradient in~\eqref{eq:tau_local_model} with all quantities evaluated at $\boldsymbol{\tau}=\boldsymbol{\tau}^{(k)}$ during the current orientation update.
	
	Replacing the SNR constraints in~\eqref{eq:alg2_tau_subprob_snr} with the quadratic model in~\eqref{eq:tau_local_model} gives
	\begin{subequations}
		\label{eq:tau_sub}
		\begin{align}
			\textrm{(P3-1)}\ 
			\max_{\mathbf d^{\boldsymbol{\tau}},\Gamma}\quad
			& \Gamma
			\label{eq:tau_sub_obj}
			\\
			\mathrm{s.t.}\ 
			&
			q_{u}^{\boldsymbol{\tau}}
			\left(
			\boldsymbol{\tau}^{(k)}+\mathbf d^{\boldsymbol{\tau}}
			\mid
			\boldsymbol{\tau}^{(k)}
			\right)
			\ge
			\Gamma,
			\forall \mathbf q_u\in\mathcal Q,
			\label{eq:tau_sub_snr}
			\\
			&
			\boldsymbol{\tau}_m^{(k)}
			+
			\mathbf d_m^{\boldsymbol{\tau}}
			\in
			\mathcal T_m,
			\forall m\in\mathcal S^{(k+1)}.
			\label{eq:tau_sub_box}
		\end{align}
	\end{subequations}
	Since $q_u^{\boldsymbol{\tau}}$ is concave quadratic in $\mathbf d^{\boldsymbol{\tau}}$, \textrm{(P3-1)} is a convex quadratically constrained quadratic program (QCQP) under~\eqref{eq:tau_sub_box} and can be solved using CVX. Let $\mathbf d^{\boldsymbol{\tau},\star}$ denote the solution of~\textrm{(P3-1)} and set the orientation candidate as $\boldsymbol{\tau}^{\star}\triangleq\boldsymbol{\tau}^{(k)}+\mathbf d^{\boldsymbol{\tau},\star}$. Because $\mathbf d^{\boldsymbol{\tau}}=\mathbf 0$ is feasible and the local model matches the current SNR, the optimal value of~\textrm{(P3-1)} is no smaller than the current worst-case SNR. 
	The resulting orientation candidate therefore preserves or improves the current worst-case SNR whenever the model value does not exceed the true SNR at every location as expressed by
	\begin{equation}
		q_{u}^{\boldsymbol{\tau}}
		\left(
		\boldsymbol{\tau}^{\star}
		\mid
		\boldsymbol{\tau}^{(k)}
		\right)
		\le
		\gamma_u(\boldsymbol{\tau}^{\star}),
		\quad
		\forall \mathbf q_u\in\mathcal Q.
		\label{eq:tau_check}
	\end{equation}
	 When~\eqref{eq:tau_check} fails at location $u$, increase $L_u^\tau$ as $L_u^\tau\leftarrow\eta_\tau L_u^\tau$ with $\eta_\tau>1$ to lower the local model before solving~\textrm{(P3-1)} again. The candidate is accepted as $\boldsymbol{\tau}^{(k+1)}=\boldsymbol{\tau}^{\star}$ when~\eqref{eq:tau_check} holds at all locations, while $\boldsymbol{\tau}^{(k+1)}=\boldsymbol{\tau}^{(k)}$ is retained if no candidate is accepted within $T_\tau^{\max}$ trials. For every $m\notin\mathcal S^{(k+1)}$, set $\boldsymbol{\tau}_m^{(k+1)}=\boldsymbol{\tau}_m^{\mathrm{br}}\in\mathcal T_m$ with $\boldsymbol{\tau}_m^{\mathrm{br}}$ obtained by clipping the inclination and azimuth associated with the bisector of $\mathbf e_m^{\mathrm{in}}$ and $\mathbf e_m^{\mathrm{ref}}$ before the next deployment update.
	
	\textit{3) Optimizing $\{\mathbf v_m\}$ for Given $\mathbf s$ and $\{\boldsymbol{\tau}_m\}$:}
	With $\mathbf s^{(k+1)}$ and $\{\boldsymbol{\tau}_m^{(k+1)}\}$ fixed, each unit-modulus reflection coefficient is parameterized as $[\mathbf v_m]_n=e^{j\varphi_{m,n}}$ with $\varphi_{m,n}\in\mathbb R$ to represent~\eqref{prob:maxmin_raw_phase} modulo $2\pi$. The total channel at location $u$ becomes
	\begin{equation}
		\bar h_u(\boldsymbol{\varphi})
		=
		h_{\mathrm{BV},u}
		+
		\sum\nolimits_{m\in\mathcal S^{(k+1)}}
		\sum\nolimits_{n\in\mathcal N_m}
		b_{m,u,n}^{(k+1)}
		e^{j\varphi_{m,n}},
		\label{eq:phase_channel_sum}
	\end{equation}
	where $b_{m,u,n}^{(k+1)}$, the coefficient multiplying $e^{j\varphi_{m,n}}$ for element $n$ on IRS $m$, is defined in~\eqref{eq:phase_element_coeff} at the top of the next page,
	\begin{figure*}[t]
		\begin{equation}
			\begin{aligned}
				\!\! 		b_{m,u,n}^{(k+1)}
				& \triangleq 
				\sqrt{
					\beta_{\mathrm{BI},m}
					\beta_{\mathrm{IV},m,u}
					N_t
					G_{\mathrm e}\bigl(\theta_{\mathrm B}(\mathbf q_m)\bigr)
					G_{\mathrm I}(\theta_m^{\mathrm{in}})
					G_{\mathrm I}(\theta_{m,u}^{\mathrm{out}})
				}
				F_{\mathrm B}\bigl(\theta_{\mathrm B}(\mathbf q_m)\bigr)
				e^{-j\frac{2\pi}{\lambda}
					(d_{\mathrm{BI},m}+d_{\mathrm{IV},m,u})}
				[\mathbf a_{m,u}^{\mathrm{out}}]_n
				[\mathbf a_m^{\mathrm{in}}]_n .\!\!
			\end{aligned}
			\label{eq:phase_element_coeff}
		\end{equation}
		\hrulefill
	\end{figure*}
	The elements of the selected IRSs are stacked in increasing $m$ and $n$ with the total number $N_{\mathcal S}\triangleq\sum_{m\in\mathcal S^{(k+1)}}N_m$. For stacked index $i$ corresponding to element $n_i$ on IRS $m_i$, define $b_{u,i}^{(k+1)}\triangleq b_{m_i,u,n_i}^{(k+1)}$ and $\varphi_i\triangleq\varphi_{m_i,n_i}$ with $\boldsymbol{\varphi}\triangleq[\varphi_1,\ldots,\varphi_{N_{\mathcal S}}]^T$.  Equation~\eqref{eq:phase_channel_sum} then becomes
	\begin{equation}
		\bar h_u(\boldsymbol{\varphi})
		=
		h_{\mathrm{BV},u}
		+
		\sum\nolimits_{i=1}^{N_{\mathcal S}}
		b_{u,i}^{(k+1)}
		e^{j\varphi_i}.
		\label{eq:f_phase}
	\end{equation}
	The corresponding SNR is
	\begin{equation}
		\gamma_u(\boldsymbol{\varphi}) = \frac{P_0}{\sigma^2} \left| \bar h_u(\boldsymbol{\varphi}) \right|^2 .
		\label{eq:alg2_gamma_v_quadratic}
	\end{equation}
	Substituting~\eqref{eq:alg2_gamma_v_quadratic} into the SNR constraint of~\textrm{(P1)} with $\mathbf s^{(k+1)}$ and $\{\boldsymbol{\tau}_m^{(k+1)}\}$ fixed gives the phase-shift subproblem
	\begin{subequations}
		\label{eq:alg2_v_subprob}
		\begin{align}
			\textrm{(P4)}\quad
			\max_{\boldsymbol{\varphi},\,\Gamma}
			\quad
			& \Gamma
			\label{eq:alg2_v_subprob_obj}
			\\
			\mathrm{s.t.}\quad
			& \gamma_u(\boldsymbol{\varphi})\ge\Gamma,
			\quad \forall \mathbf q_u\in\mathcal Q.
			\label{eq:alg2_v_subprob_snr}
		\end{align}
	\end{subequations}
	The trigonometric polynomial $\gamma_u(\boldsymbol{\varphi})$ is generally non-concave. With $\mathbf d^{\boldsymbol{\varphi}}\triangleq\boldsymbol{\varphi}-\boldsymbol{\varphi}^{(k)}$, the quadratic local model of $\gamma_u(\boldsymbol{\varphi})$ at $\boldsymbol{\varphi}^{(k)}$ is
	\begin{equation}
		\begin{aligned}
			q_u^{\boldsymbol{\varphi}}
			(
			\boldsymbol{\varphi}
			\mid
			\boldsymbol{\varphi}^{(k)}
			)
			&=
			\gamma_u(\boldsymbol{\varphi}^{(k)})
			+
			(\mathbf g_u^{\boldsymbol{\varphi}})^T
			\mathbf d^{\boldsymbol{\varphi}}
			-
			\frac{L_u^\varphi}{2}
			\|\mathbf d^{\boldsymbol{\varphi}}\|^2,
		\end{aligned}
		\label{eq:phi_local_model}
	\end{equation}
	where $\mathbf g_u^{\boldsymbol{\varphi}}\triangleq\left.\nabla_{\boldsymbol{\varphi}}\gamma_u(\boldsymbol{\varphi})\right|_{\boldsymbol{\varphi}=\boldsymbol{\varphi}^{(k)}}$ and $L_u^\varphi>0$ is a curvature parameter adjusted by backtracking. Equation~\eqref{eq:f_phase} gives $\partial\bar h_u(\boldsymbol{\varphi})/\partial\varphi_i=jb_{u,i}^{(k+1)}e^{j\varphi_i}$ and $\partial\gamma_u(\boldsymbol{\varphi})/\partial\varphi_i=-(2P_0/\sigma^2)\Im\{\bar h_u^*(\boldsymbol{\varphi})b_{u,i}^{(k+1)}e^{j\varphi_i}\}$. Evaluating the SNR derivative at $\boldsymbol{\varphi}^{(k)}$ gives the $i$-th entry of $\mathbf g_u^{\boldsymbol{\varphi}}$ as
	\begin{equation}
		\!\!	\left[
		\mathbf g_u^{\boldsymbol{\varphi}}
		\right]_i
		\!	=\!
		-\!
		\frac{2P_0}{\sigma^2}
		\Im
		\left\{
		(\bar h_u^{(k)})^*
		b_{u,i}^{(k+1)}
		e^{j\varphi_i^{(k)}}
		\right\}, 
		i=1,\ldots,N_{\mathcal S},
		\label{eq:phase_gradient_entry}
	\end{equation}
	where $\bar h_u^{(k)}\triangleq\bar h_u(\boldsymbol{\varphi}^{(k)})$ and $\Im\{\cdot\}$ denotes the imaginary part.
	Replacing~\eqref{eq:alg2_v_subprob_snr} in~\textrm{(P4)} with the quadratic model in~\eqref{eq:phi_local_model} using~\eqref{eq:phase_gradient_entry} gives
	\begin{subequations}
		\label{eq:phi_sub}
		\begin{align}
			\textrm{(P4-1)}\quad
			\max_{\boldsymbol{\varphi},\,\Gamma}
			\quad
			& \Gamma
			\label{eq:phi_sub_obj}
			\\
			\mathrm{s.t.}\quad
			&
			q_u^{\boldsymbol{\varphi}}
			(
			\boldsymbol{\varphi}
			\mid
			\boldsymbol{\varphi}^{(k)}
			)
			\ge
			\Gamma,
			\quad \forall \mathbf q_u\in\mathcal Q .
			\label{eq:phi_sub_snr}
		\end{align}
	\end{subequations}
	Problem~\textrm{(P4-1)} is a convex QCQP because $q_u^{\boldsymbol{\varphi}}$ is concave and the max-min structure enables the lower-complexity solver derived below.
	
	Problem~\textrm{(P4-1)} maximizes the minimum quadratic-model value over all sampled locations. The phase increment is therefore obtained from
	\begin{equation}
		\max_{\mathbf d^{\boldsymbol{\varphi}}}
		\min_{\mathbf q_u\in\mathcal Q}
		q_u^{\boldsymbol{\varphi}}
		(
		\boldsymbol{\varphi}^{(k)}
		+
		\mathbf d^{\boldsymbol{\varphi}}
		\mid
		\boldsymbol{\varphi}^{(k)}
		).
		\label{eq:phi_max_min_d}
	\end{equation}
	For the worst-case term in~\eqref{eq:phi_max_min_d}, introduce the simplex variable $\mathbf p=[p_u]_{\mathbf q_u\in\mathcal Q}$, where $p_u$ weights location $\mathbf q_u$, and
	\begin{equation}
		\Delta
		\triangleq
		\left\{
		\mathbf p\in\mathbb R^{|\mathcal Q|}
		\mid
		p_u\ge0,
		\sum\nolimits_{\mathbf q_u\in\mathcal Q}p_u=1
		\right\}.
		\label{eq:phase_simplex}
	\end{equation}
	Problem~\textrm{(P4-1)} maximizes the linear objective $\Gamma$ subject to $\Gamma\le q_u^{\boldsymbol{\varphi}}(\boldsymbol{\varphi}^{(k)}+\mathbf d^{\boldsymbol{\varphi}}\mid\boldsymbol{\varphi}^{(k)})$ for all $\mathbf q_u\in\mathcal Q$. Since $\mathbf d^{\boldsymbol{\varphi}}=\mathbf0$ with any $\Gamma<\min_{\mathbf q_u\in\mathcal Q}\gamma_u(\boldsymbol{\varphi}^{(k)})$ is strictly feasible, Slater's condition holds. 
	Introducing the multiplier $p_u\ge0$ for each constraint gives the Lagrangian
	\begin{equation}
		\mathcal L
	=
	\Gamma
	+
	\sum_{\mathbf q_u\in\mathcal Q}p_u
	\left(
	q_u^{\boldsymbol{\varphi}}
	(\boldsymbol{\varphi}^{(k)}+\mathbf d^{\boldsymbol{\varphi}}
	\mid\boldsymbol{\varphi}^{(k)})
	-\Gamma
	\right).
	\end{equation}
	The resulting dual function is finite only when the multipliers satisfy the simplex constraint $\mathbf p\in\Delta$ in~\eqref{eq:phase_simplex}. Strong duality then gives
	\begin{equation}
		\begin{aligned}
			\max_{\mathbf d^{\boldsymbol{\varphi}}}
			\min_{\mathbf q_u\in\mathcal Q}
			&	q_u^{\boldsymbol{\varphi}}
			(
			\boldsymbol{\varphi}^{(k)}
			+
			\mathbf d^{\boldsymbol{\varphi}}
			\mid
			\boldsymbol{\varphi}^{(k)}
			)
			\\
			&=
			\min_{\mathbf p\in\Delta}
			\max_{\mathbf d^{\boldsymbol{\varphi}}}
			\sum_{\mathbf q_u\in\mathcal Q}
			p_u
			q_u^{\boldsymbol{\varphi}}
			(
			\boldsymbol{\varphi}^{(k)}
			+
			\mathbf d^{\boldsymbol{\varphi}}
			\mid
			\boldsymbol{\varphi}^{(k)}
			).
		\end{aligned}
		\label{eq:phi_dual_minmax}
	\end{equation}
	For a given $\mathbf p\in\Delta$, the objective of the maximization over $\mathbf d^{\boldsymbol{\varphi}}$ in~\eqref{eq:phi_dual_minmax} is
	\begin{equation}
		\begin{aligned}
			&
			\sum\nolimits_{\mathbf q_u\in\mathcal Q}
			p_u
			q_u^{\boldsymbol{\varphi}}
			(
			\boldsymbol{\varphi}^{(k)}
			+
			\mathbf d^{\boldsymbol{\varphi}}
			\mid
			\boldsymbol{\varphi}^{(k)}
			)
			\\
			&=
			c(\mathbf p)
			+
			(\mathbf g^{\boldsymbol{\varphi}}(\mathbf p))^T
			\mathbf d^{\boldsymbol{\varphi}}
			-
			\frac{W(\mathbf p)}{2}
			\|
			\mathbf d^{\boldsymbol{\varphi}}
			\|^2,
		\end{aligned}
		\label{eq:phase_weighted_model}
	\end{equation}
	where $\mathbf g^{\boldsymbol{\varphi}}(\mathbf p)\triangleq\sum_{\mathbf q_u\in\mathcal Q}p_u\mathbf g_u^{\boldsymbol{\varphi}}$, $W(\mathbf p)\triangleq\sum_{\mathbf q_u\in\mathcal Q}p_u L_u^\varphi$, and
	$c(\mathbf p)\triangleq\sum_{\mathbf q_u\in\mathcal Q}p_u\gamma_u(\boldsymbol{\varphi}^{(k)})$.
	The objective in~\eqref{eq:phase_weighted_model} is concave quadratic because $W(\mathbf p)>0$ and has gradient $\mathbf g^{\boldsymbol{\varphi}}(\mathbf p)-W(\mathbf p)\mathbf d^{\boldsymbol{\varphi}}$. Setting the gradient to zero gives the maximizer in~\eqref{eq:phi_dual_minmax} as
	\begin{equation}
		\mathbf d^{\boldsymbol{\varphi},\star}(\mathbf p)
		=\mathbf g^{\boldsymbol{\varphi}}(\mathbf p)
		(W(\mathbf p))^{-1}.
		\label{eq:phi_dual_dstar}
	\end{equation}
	Substituting~\eqref{eq:phi_dual_dstar} into~\eqref{eq:phase_weighted_model} and the outer minimization in~\eqref{eq:phi_dual_minmax} reduces to
	\begin{equation}
		\min_{\mathbf p\in\Delta} V(\mathbf p) \triangleq \min_{\mathbf p\in\Delta} c(\mathbf p)+\|\mathbf g^{\boldsymbol{\varphi}}(\mathbf p)\|^2/(2W(\mathbf p)).
		\label{eq:phase_dual_problem}
	\end{equation}
	The affine dependence of $c(\mathbf p)$, $\mathbf g^{\boldsymbol{\varphi}}(\mathbf p)$, and $W(\mathbf p)$ on $\mathbf p$ with $W(\mathbf p)>0$ makes $V(\mathbf p)$ convex over $\Delta$. Since the inner maximizer $\mathbf d^{\boldsymbol{\varphi},\star}(\mathbf p)$ in~\eqref{eq:phi_dual_dstar} is unique, the gradient required to minimize $V(\mathbf p)$ is given by the Danskin theorem~\cite{danskin1967maxmin} as
	\begin{equation}
		\left[
		\nabla V(\mathbf p)
		\right]_u
		=
		q_u^{\boldsymbol{\varphi}}
		(
		\boldsymbol{\varphi}^{(k)}
		+
		\mathbf d^{\boldsymbol{\varphi},\star}(\mathbf p)
		\mid
		\boldsymbol{\varphi}^{(k)}
		),
		\quad \mathbf q_u\in\mathcal Q .
		\label{eq:phase_dual_grad}
	\end{equation}
	
	Problem~\eqref{eq:phase_dual_problem} is therefore solved by projected gradient descent (PGD) with iterate
	\begin{equation}
		\mathbf p^{(t+1)}
		=
		\Pi_{\Delta}
		\left(
		\mathbf p^{(t)}
		-
		\chi_t
		\nabla V(\mathbf p^{(t)})
		\right),
		\label{eq:phi_dual_pg}
	\end{equation}
	where $\Pi_\Delta(\cdot)$ denotes the Euclidean projection onto $\Delta$ and the step size $\chi_t>0$ is set by backtracking. The iteration stops when the PGD residual falls below $\epsilon_{\mathrm{pgd}}$ or the iteration count reaches $T_{\mathrm{pgd}}$ and returns $\mathbf p^\star$. The candidate phase increment is $\mathbf d^{\boldsymbol{\varphi},\star}=\mathbf d^{\boldsymbol{\varphi},\star}(\mathbf p^\star)$ and the corresponding phase vector is $\boldsymbol{\varphi}^{\star}=\boldsymbol{\varphi}^{(k)}+\mathbf d^{\boldsymbol{\varphi},\star}$.
	The candidate is accepted only when
	\begin{equation}
		q_u^{\boldsymbol{\varphi}}
		(
		\boldsymbol{\varphi}^\star
		\mid
		\boldsymbol{\varphi}^{(k)}
		)
		\le
		\gamma_u(\boldsymbol{\varphi}^\star),
		\quad
		\forall \mathbf q_u\in\mathcal Q,
		\label{eq:phi_check}
	\end{equation}
	and
	\begin{equation}
		\min_{\mathbf q_u\in\mathcal Q}
		\gamma_u(\boldsymbol{\varphi}^\star)
		\ge
		\min_{\mathbf q_u\in\mathcal Q}
		\gamma_u(\boldsymbol{\varphi}^{(k)}).
		\label{eq:phi_true_accept}
	\end{equation}
	When~\eqref{eq:phi_check} fails at location $u$, the model overestimates the coherent SNR at $\boldsymbol{\varphi}^\star$, and the curvature parameter is enlarged as $L_u^\varphi\leftarrow\eta_\varphi L_u^\varphi$ with $\eta_\varphi>1$. Problem~\textrm{(P4-1)} is then solved again, and this enlargement is repeated for at most $T_\varphi^{\max}$ iterations if the check remains unsatisfied. If~\eqref{eq:phi_check} holds at all locations and~\eqref{eq:phi_true_accept} is satisfied, the phase update is accepted as $\boldsymbol{\varphi}^{(k+1)}=\mathrm{mod}(\boldsymbol{\varphi}^{\star},2\pi)$. Otherwise, the previous phase is retained as $\boldsymbol{\varphi}^{(k+1)}=\boldsymbol{\varphi}^{(k)}$. For $m\in\mathcal S^{(k+1)}$, the reflection coefficient of element $n$ is recovered as $[\mathbf v_m^{(k+1)}]_n=e^{j\varphi_i^{(k+1)}}$, where $i$ is the corresponding stacked index. For $m\notin\mathcal S^{(k+1)}$, the reflection coefficients are aligned at the reference direction through $\varphi_{m,n}=-\frac{2\pi}{\lambda}(\mathbf e_m^{\mathrm{in}}+\mathbf e_m^{\mathrm{ref}})^T\widetilde{\mathbf p}_{m,n}$. 

	\subsection{Overall Procedure, Convergence, and Complexity}
	\label{subsec:alg2_summary}

	\begin{algorithm}[t]
		\caption{Mixed-Integer AO Algorithm}
		\label{alg:alg2}
		\footnotesize
		\begin{algorithmic}[1]
			\STATE \textbf{Input} $\mathcal M$, $\mathcal Q$, $M_{\max}$, $\{\mathcal T_m\}$, feasible $\mathbf s^{(0)}$, $\{\boldsymbol{\tau}_m^{(0)},\boldsymbol{\varphi}_m^{(0)}\}_{m\in\mathcal M}$, tolerance $\epsilon_{\mathrm{in}}$ (in dB), and $K_{\max}$.
			\STATE \textbf{Parameters} $\{L_u^\tau,L_u^\varphi\}_{\mathbf q_u\in\mathcal Q}$, $\eta_\tau$, $\eta_\varphi$, $T_\tau^{\max}$, $T_\varphi^{\max}$, $T_{\mathrm{pgd}}$, and $\epsilon_{\mathrm{pgd}}$.
			\STATE Set $[\mathbf v_m^{(0)}]_n=e^{j\varphi_{m,n}^{(0)}}$ for all $m\in\mathcal M$ and $n\in\mathcal N_m$, set $k\gets0$, and compute $\gamma_{\min}^{(0)}$.
			\REPEAT
			\STATE Solve~\eqref{prob:alg2_s_sub_full} for fixed $\{\boldsymbol{\tau}_m^{(k)}\}$ and $\{\mathbf v_m^{(k)}\}$ to obtain $\mathbf s^{(k+1)}$ and $\mathcal S^{(k+1)}$.
			\STATE Update $\boldsymbol{\tau}^{(k+1)}$ by solving~\eqref{eq:tau_sub} with the backtracking condition in~\eqref{eq:tau_check}. Retain $\boldsymbol{\tau}^{(k)}$ when no candidate satisfies~\eqref{eq:tau_check}.
			\STATE Update $\boldsymbol{\varphi}^{(k+1)}$ by solving~\eqref{eq:phase_dual_problem} with the checks in~\eqref{eq:phi_check} and~\eqref{eq:phi_true_accept}. Retain $\boldsymbol{\varphi}^{(k)}$ when no candidate is accepted.
			\STATE Recover $\{\mathbf v_m^{(k+1)}\}_{m\in\mathcal S^{(k+1)}}$ from $\boldsymbol{\varphi}^{(k+1)}$ and assign the reference-direction phases to $m\notin\mathcal S^{(k+1)}$.
			\STATE Compute $\gamma_{\min}^{(k+1)}$ and set $k\gets k+1$.
			\UNTIL{$|10\log_{10}\gamma_{\min}^{(k)}-10\log_{10}\gamma_{\min}^{(k-1)}|\le\epsilon_{\mathrm{in}}$ or $k=K_{\max}$}
			\STATE \textbf{Output} $\mathbf s^\star=\mathbf s^{(k)}$, $\{\boldsymbol{\tau}_m^\star\}=\{\boldsymbol{\tau}_m^{(k)}\}$, $\{\mathbf v_m^\star\}=\{\mathbf v_m^{(k)}\}$, and $\gamma_{\min}^\star=\gamma_{\min}^{(k)}$.
		\end{algorithmic}
	\end{algorithm}

	The overall procedure is summarized in Algorithm~\ref{alg:alg2}. 
	Each block update preserves or increases $\gamma_{\min}(\mathbf s,\boldsymbol{\tau},\boldsymbol{\varphi})$. Deployment $\mathbf s^{(k)}$ is feasible for~\eqref{prob:alg2_s_sub_full}, yielding a non-decreasing deployment update. The orientation update is non-decreasing because $\mathbf d^{\boldsymbol{\tau}}=\mathbf 0$ is feasible for~\textrm{(P3-1)}, the local model equals $\gamma_u$ at $\boldsymbol{\tau}^{(k)}$, and every accepted candidate satisfies~\eqref{eq:tau_check}. A phase candidate is accepted only when~\eqref{eq:phi_true_accept} holds, while $\boldsymbol{\varphi}^{(k)}$ is retained otherwise. The merit sequence ${\gamma_{\min}^{(k)}}$ is therefore monotonically non-decreasing. Each $\gamma_u$ is a finite sum of bounded terms at finite transmit power, and $\min_{\mathbf q_u\in\mathcal Q}\gamma_u$ is bounded above. The merit sequence converges.
	
	Per-iteration complexity combines the three block costs. The deployment MILP in~\eqref{prob:alg2_s_sub_full} has $M$ binary variables, $\binom{M}{2}$ product variables, and $|\mathcal Q|$ constraints, with branch-and-bound cost exponential in $M$. The orientation block evaluates the SNR gradients in~\eqref{eq:tau_grad_from_channel} at $\mathcal O(|\mathcal S||\mathcal Q|\bar N)$ operations, with $\bar N$ the average number of elements per selected IRS, and solves~\eqref{eq:tau_sub} at most $T_\tau^{\max}$ times. The phase block computes per-element coefficients and gradients at $\mathcal O(N_{\mathcal S}|\mathcal Q|)$ operations, with $N_{\mathcal S}=\sum_{m\in\mathcal S}N_m$, and runs $T_{\mathrm{pgd}}$ PGD steps for~\eqref{eq:phase_dual_problem}. The orientation and phase blocks scale polynomially, while the deployment MILP scales exponentially in $M$ and dominates the per-iteration cost for large $M$. Combining the three block costs, the per-iteration complexity is $\mathcal O\bigl(2^{M}|\mathcal Q|+T_\tau^{\max}|\mathcal S||\mathcal Q|\bar N+T_{\mathrm{pgd}}N_{\mathcal S}|\mathcal Q|\bigr)$. With at most $K_{\max}$ outer iterations, the overall complexity is $\mathcal O\bigl(K_{\max}\bigl[2^{M}|\mathcal Q|+T_\tau^{\max}|\mathcal S||\mathcal Q|\bar N+T_{\mathrm{pgd}}N_{\mathcal S}|\mathcal Q|\bigr]\bigr)$.

	\section{Numerical Results}
	\label{sec:results}
	
	\subsection{Simulation Setup and Baselines}
	\label{sec:setup}
	We consider a $300$~m $\times$ $300$~m urban area with a $4\times4$ grid of buildings whose horizontal centers lie at $(x,y)\in\{-90,-30,30,90\}~\text{m}\times\{-90,-30,30,90\}~\text{m}$, with a footprint of $40$~m$\times40$~m each. The initial candidate set $\mathcal M_0$ contains one mast-mounted rooftop-center IRS site per building. The target airspace covers $[-70,70]$~m$\times[-70,70]$~m and the altitude interval $[50,110]$~m.
	Unless otherwise specified, the system parameters are set as follows. $f_c=3.5$~GHz, $B=20$~MHz, $H_B=35$~m, $N_t=8$, $d_v=\lambda/2$, $\theta_{\mathrm{tilt}}=8^\circ$, $P_0=37$~dBm, $\sigma^2=-92$~dBm, $G_{\mathrm e}^{\max}=8$~dBi, $\theta_{\mathrm{3dB}}=65^\circ$, $\mathrm{SLA}_V=30$~dB, $\alpha_{\mathrm L}=2.2$, $d_0=1$~m, $\Delta_v=10$~m. The rooftop heights $H_{b(m)}$ are drawn independently and uniformly from $[18,25]$~m. With $\eta_B=3$~dB, the analytical main-lobe screening determines the retained candidate set $\mathcal M$, and the installation height at each retained candidate follows the projection in~\eqref{eq:height_projection_rule}. The sampling interval gives $14\times14\times6=1176$ basic locations, which are augmented with the predicted direct-link null heights to form $\mathcal Q$.
	The main coverage experiments use a square UPA with $N_m=100$ reflecting elements and half-wavelength spacing $d_e=\lambda/2$ unless the array size is explicitly varied. The element ERP in~\eqref{eq:irs_erp} uses $p_I=2$ with $G_{\mathrm I}^{\max}=2(p_I+1)$. The validation of Proposition~\ref{prop:regional_coherence} considers $N_m\in\{36,100,196\}$.  The orientation bounds in~\eqref{eq:orientation_feasible_set} are $\theta_m\in[0^\circ,90^\circ]$ and $\phi_m\in[0^\circ,360^\circ)$ for all candidates, while $M_{\max}\in\{1,2,3,4,6,8,10\}$.
	We compare the proposed scheme with six coverage-design baselines listed as follows:

	\begin{itemize}
		\item \textit{No-IRS}. Only the BS direct link is used without any IRS.
		
		\item \textit{Random-Site}. Ten trials randomly select $M_{\max}$ sites from $\mathcal M$ and optimize the corresponding IRS orientations and phase shifts as in the proposed scheme. The worst-case SNR is averaged in the linear scale and reported in decibels.
			
		\item \textit{Max-BI-Power}. This scheme selects the $M_{\max}$ rooftop candidates with the largest BS-to-IRS incident power $\beta_{\mathrm{BI},m}G_{\mathrm B}(\mathbf q_m)$ and optimizes the orientations and phase shifts of the selected IRSs.
		
		\item \textit{Centroid Phase}. The deployment and orientations follow the proposed scheme, and the cascaded coefficients are co-phased toward the reflection-strength-weighted centroid of the associated front-half-space locations.
		
		\item \textit{ERP-Agnostic}. The orientations and phase shifts are optimized under $G_{\mathrm I}\equiv1$, and the resulting coverage is evaluated using the IRS element ERP in~\eqref{eq:irs_erp}.
		
		\item \textit{Fixed-Tilt}. All IRS orientations remain fixed at the initial values $\boldsymbol{\tau}_m^{(0)}$, and the deployment indicators and phase shifts are optimized.
	\end{itemize}
	
	\subsection{Validation of Analytical Results and Algorithm}
	
		\begin{figure}[t!]
		\centering
		\includegraphics[width=3.4   in]{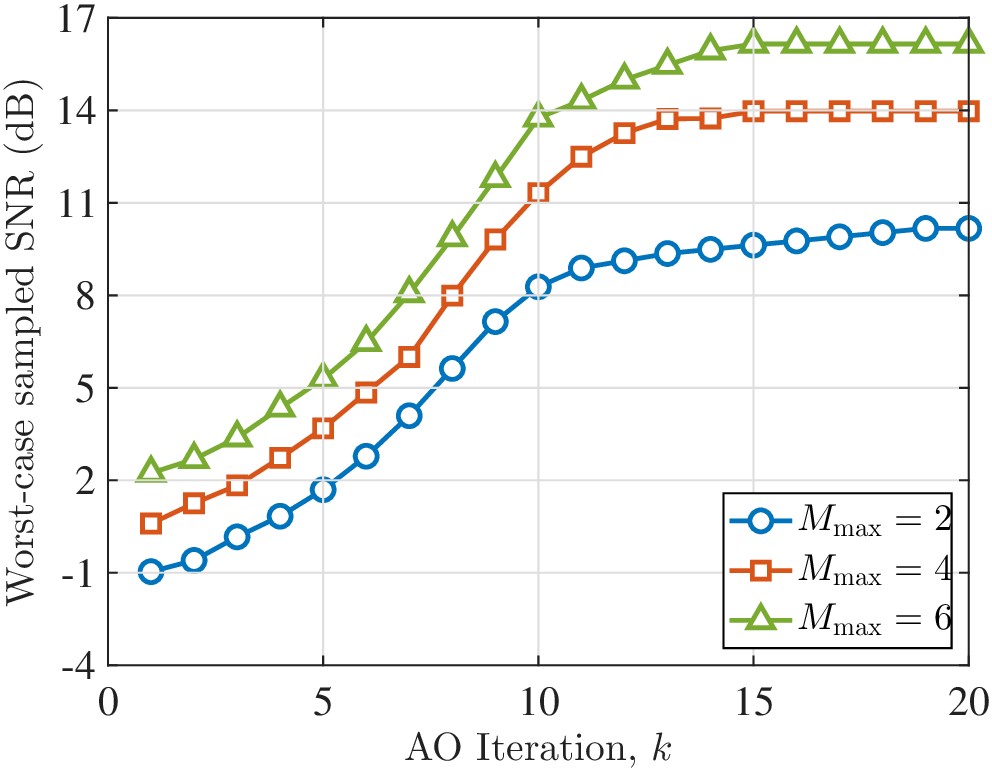}
		\caption{Convergence of the AO algorithm.}
		\label{fig:convergence}
	\end{figure}
	
	Fig.~\ref{fig:convergence} shows the convergence of the proposed AO algorithm for $M_{\max}=2$, $4$, and $6$. The worst-case SNR increases monotonically without oscillation for all deployment budgets. The worst-case SNR rises rapidly during the first several iterations as the deployment, IRS orientations, and phase shifts are jointly improved, and the increments gradually decrease as the three variable blocks become adjusted. The worst-case SNR for all three deployment budgets approaches a stable value within about ten to twelve outer iterations, indicating that the proposed algorithm converges within a moderate number of iterations. The converged SNR increases with $M_{\max}$, and the case with $M_{\max}=6$ remains above those with $M_{\max}=4$ and $2$ near convergence because a larger deployment budget provides more reflected links for improving spatially separated weak-coverage locations. 
	
	\begin{figure}[t!]
		\centering
		\includegraphics[width=3.5  in]{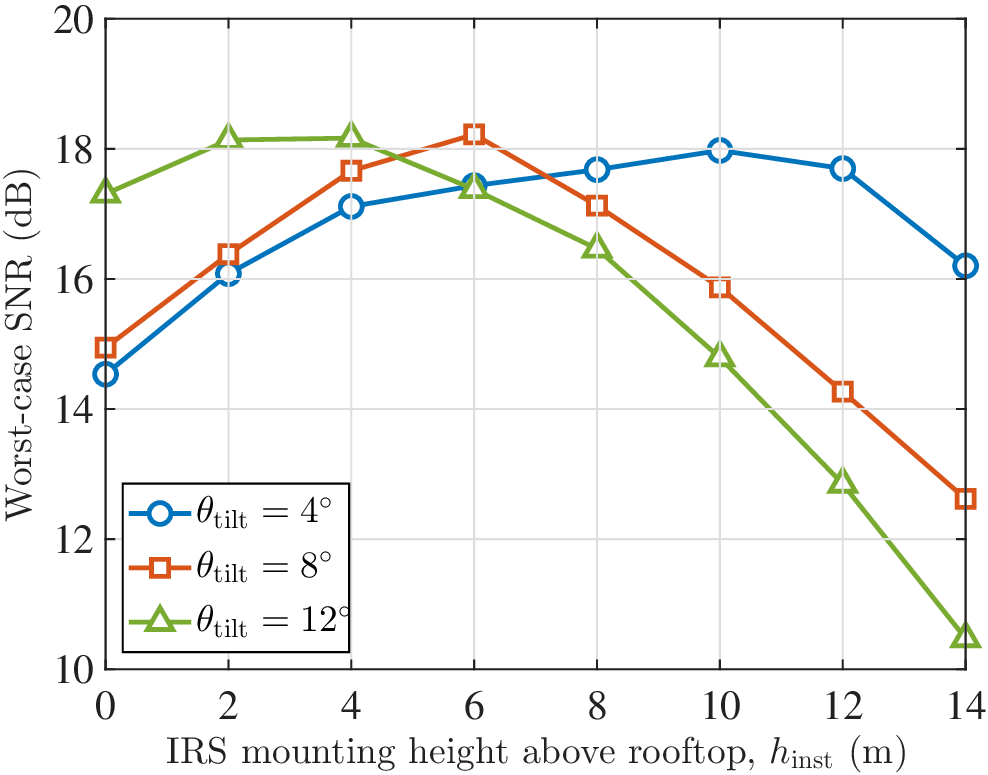}
		\caption{Worst-case SNR versus the IRS mast height $h_{\mathrm{inst}}$.}
		\label{fig:height}
	\end{figure}
	In Fig.~\ref{fig:height}, we plot the worst-case SNR versus the IRS mast height above the rooftop $h_{\mathrm{inst}}\in[0,14]$~m for $\theta_{\mathrm{tilt}}=4^\circ$, $8^\circ$, and $12^\circ$ with $M_{\max}=6$. Every candidate site uses the same mast height during this sweep, whereas the default configuration assigns a separate installation height to each retained site through the projection rule. The IRS-center height is $q_{m,z}=H_{b(m)}+h_{\mathrm{inst}}$, and the worst-case SNR first increases and then decreases with $h_{\mathrm{inst}}$ for all three downtilts. This variation mainly follows the change in BS array gain as the IRS centers approach and then move away from the main-lobe height $z=H_B-\rho\tan\theta_{\mathrm{tilt}}$. The height variation also changes the IRS-to-location distances and outgoing element gains. The favorable height range shifts downward as the downtilt increases, and the peak moves from $h_{\mathrm{inst}}=10$~m at $\theta_{\mathrm{tilt}}=4^\circ$ to $h_{\mathrm{inst}}=4$~m at $\theta_{\mathrm{tilt}}=12^\circ$. At the inner candidate distance $\rho_m=42.4$~m, the downtilt-aligned panel-center height $z_{\mathrm c}(\rho_m)$ equals $32.0$~m and $26.0$~m for these two downtilts. The predicted separation of $6.05$~m between these heights is consistent with the $6$~m shift between the corresponding SNR peaks.
	
  	\subsection{Coverage Performance and Design Insights}
	
	\begin{figure}[t!]
		\centering
		\includegraphics[width=3.4  in]{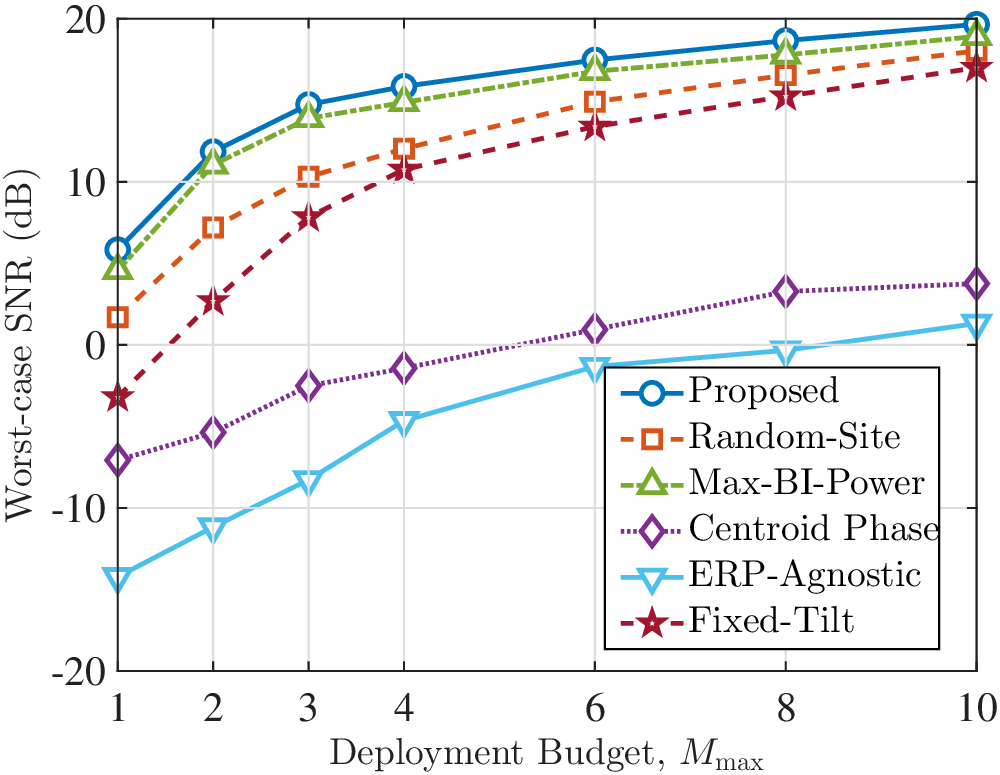}
		\caption{Worst-case SNR versus the deployment budget $M_{\max}$.}
		\label{fig:budget_comparison}
	\end{figure}
	Fig.~\ref{fig:budget_comparison} shows the worst-case SNR versus deployment budget $M_{\max}$ for the proposed scheme and five baselines. The No-IRS baseline is omitted because its SNR remains below the plotted range. The proposed scheme improves the worst-case SNR from $5.84$~dB at $M_{\max}=1$ to $17.47$~dB at $M_{\max}=6$ and $19.66$~dB at $M_{\max}=10$, while outperforming all five baselines. The increase reaches $11.63$~dB from $M_{\max}=1$ to $M_{\max}=6$, compared with only $2.19$~dB from $M_{\max}=6$ to $M_{\max}=10$. Max-BI-Power remains close to the proposed scheme and ranks candidate sites according to the BS-to-IRS incident power $\beta_{\mathrm{BI},m}G_{\mathrm B}(\mathbf q_m)$. The proposed scheme jointly determines site selection, IRS orientations, and phase shifts under the regional worst-case SNR objective. The gain over Fixed-Tilt demonstrates the value of IRS orientation optimization. ERP-Agnostic quantifies the loss caused by designing with an isotropic IRS element model and evaluating with the actual element pattern. The substantially lower SNR of Centroid Phase shows that co-phasing toward a single centroid direction cannot maintain high reflected gain over the entire covered direction range. All plotted schemes improve as $M_{\max}$ increases, while the marginal gain decreases beyond $M_{\max}=6$. The optimal value of problem~\textrm{(P1)} is nondecreasing with $M_{\max}$ because increasing the deployment budget enlarges the feasible set, whereas the diminishing gain in the obtained results reflects the reduced benefit of additional IRSs in the considered setup.

	\begin{figure}[t!]
		\centering
		\includegraphics[width=3.4   in]{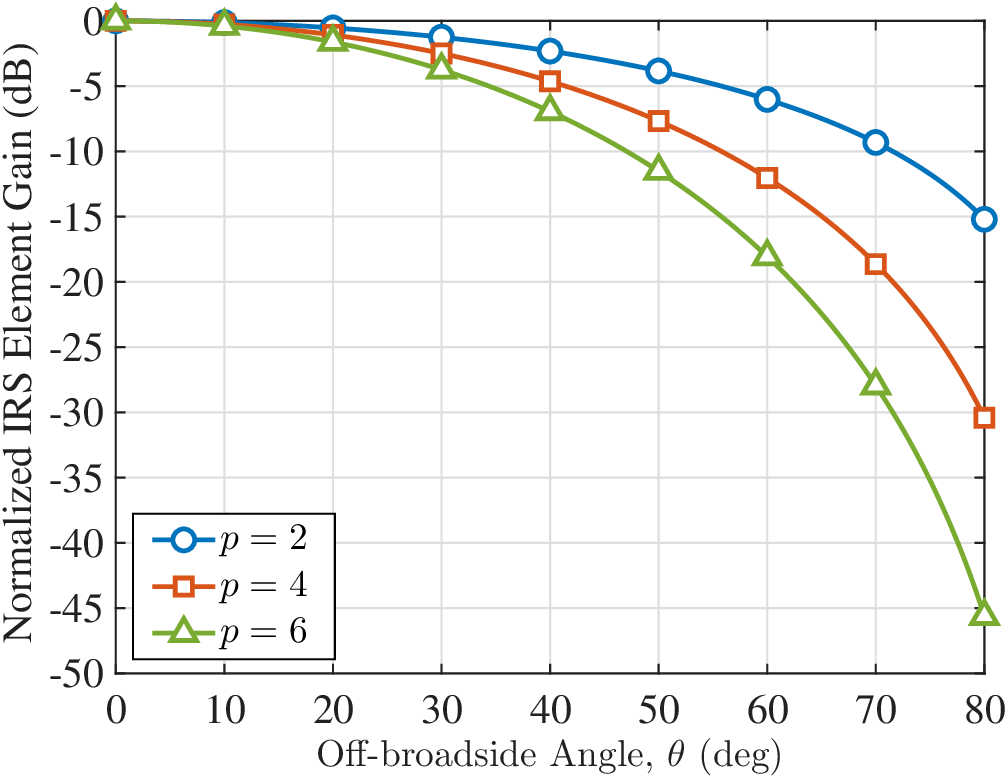}
		\caption{Normalized IRS element patterns for different $p_I$.}
		\label{fig:erp}
	\end{figure}
		
	Fig.~\ref{fig:erp} plots the normalized IRS element pattern in~\eqref{eq:irs_erp} versus the angle $\omega$ from the panel normal for $p_I=2$, $4$, and $6$. The $3$~dB half-width narrows from $45.0^\circ$ to $32.8^\circ$ and $27.0^\circ$, while the $10$~dB half-width decreases from $71.6^\circ$ to $55.8^\circ$ and $47.2^\circ$ as $p_I$ increases. At $\omega=60^\circ$, the normalized gain drops by $6.02$~dB, $12.04$~dB, and $18.06$~dB in proportion to $p_I$. The cascaded link in~\eqref{eq:cascaded_channel_expand} contains both $G_{\mathrm I}(\theta_m^{\mathrm{in}})$ and $G_{\mathrm I}(\theta_{m,u}^{\mathrm{out}})$, so the incident and outgoing angular losses add in dB. For $p_I=2$, a panel with both angles equal to $45^\circ$ incurs a $6.02$~dB loss, while the same geometry incurs $18.06$~dB for $p_I=6$. Increasing $p_I$ also raises the broadside gain $G_{\mathrm I}^{\max}=2(p_I+1)$ from $7.78$~dBi to $10.00$~dBi and $11.46$~dBi, but the absolute gain for $p_I=6$ falls below that for $p_I=2$ beyond $\omega=36.0^\circ$. A more directive element therefore provides higher gain only when the incident and outgoing angles remain sufficiently close to broadside, making IRS orientation increasingly important as $p_I$ grows.
	
	\begin{figure}[t!]
		\centering
		\includegraphics[width=3.4  in]{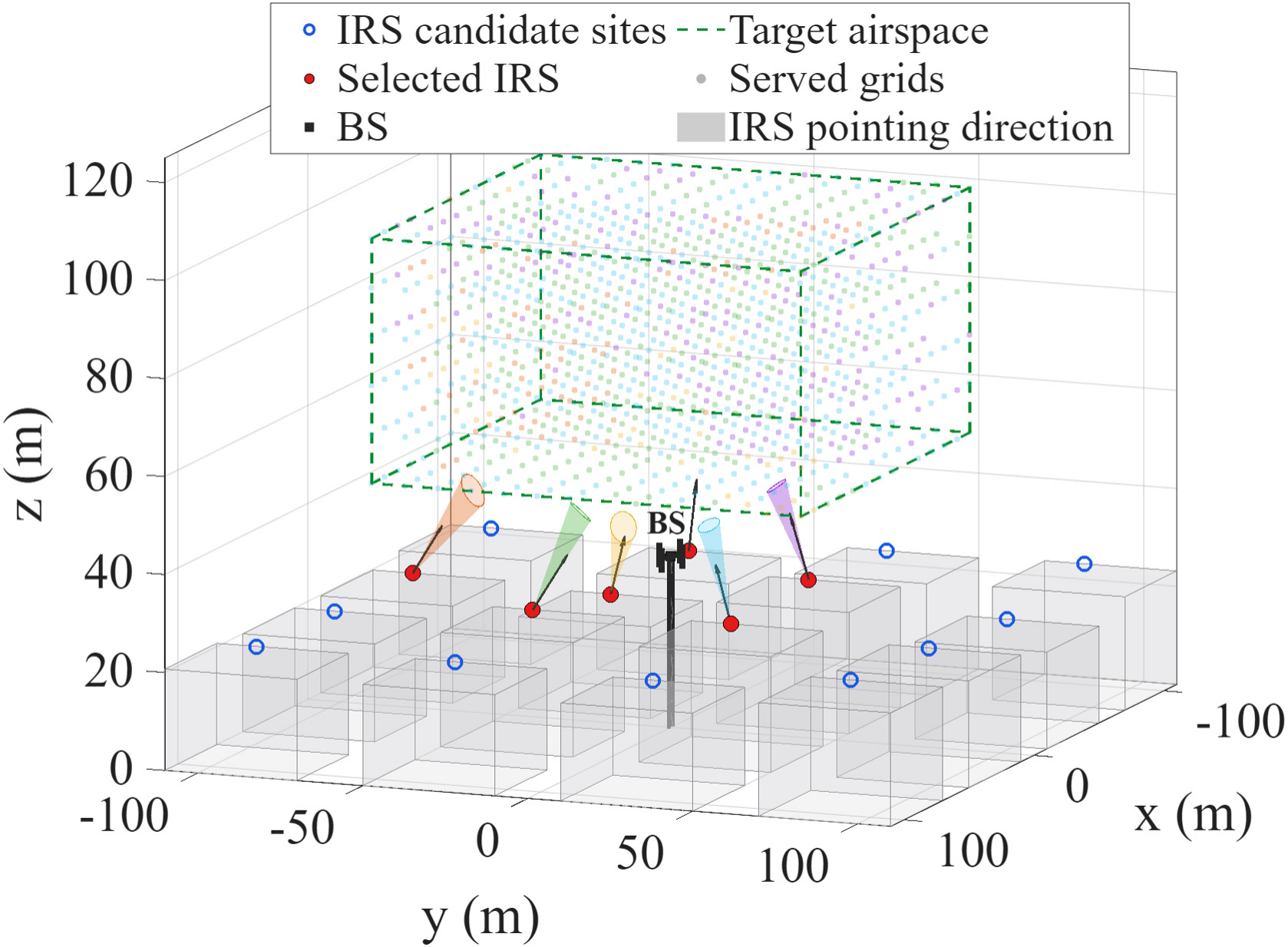}
		\caption{Optimized 3D IRS deployment.}
		\label{fig:site_selection}
	\end{figure}
	
	Fig.~\ref{fig:site_selection} illustrates the optimized 3D deployment for $M_{\max}=6$. Hollow blue markers denote the candidate rooftop sites, red markers denote the six selected IRSs, and the dark arrows show the optimized normal of each selected IRS. Each displayed location is colored according to the selected IRS that provides the largest reflected power. The translucent cones connect IRSs to the centroids of the displayed locations for which the corresponding IRSs provide the largest reflected power and visualize the associated reflection directions. The legend entry IRS pointing direction denotes these cone surfaces. Five selected IRSs have corresponding displayed locations and translucent cones. The sixth IRS directly contributes to the coherently combined received signal over the target airspace. The selected IRSs are distributed on different sides of the BS and have different optimized normals, while the interleaved color distribution shows that the IRS providing the largest reflected power varies across the target airspace. The direction span of the whole target airspace seen from a rooftop candidate ranges from $76^\circ$ to $143^\circ$ across the candidate set. For $N_m=100$ and $\bar{\kappa}=0.9$,~\eqref{eq:regional_span_guarantee} gives a sufficient direction span of $4.07^\circ$. This span is much smaller than the full-region spans observed at all candidate sites and places the prescribed normalized-gain guarantee in a narrow directional range. The optimized deployment employs spatially separated rooftop sites and different IRS orientations to provide complementary reflected links over different parts of the target region.
	\begin{figure}[t]
		\centering
		\setlength{\tabcolsep}{1pt}
		\renewcommand{\arraystretch}{0.8}
		\begin{tabular}{ccc}
			\small No-IRS, $z=55$\,m &
			\small No-IRS, $z=95$\,m \\[2pt]
			\includegraphics[
			width=0.23\textwidth,
			trim=0 0 0 28,
			clip
			]{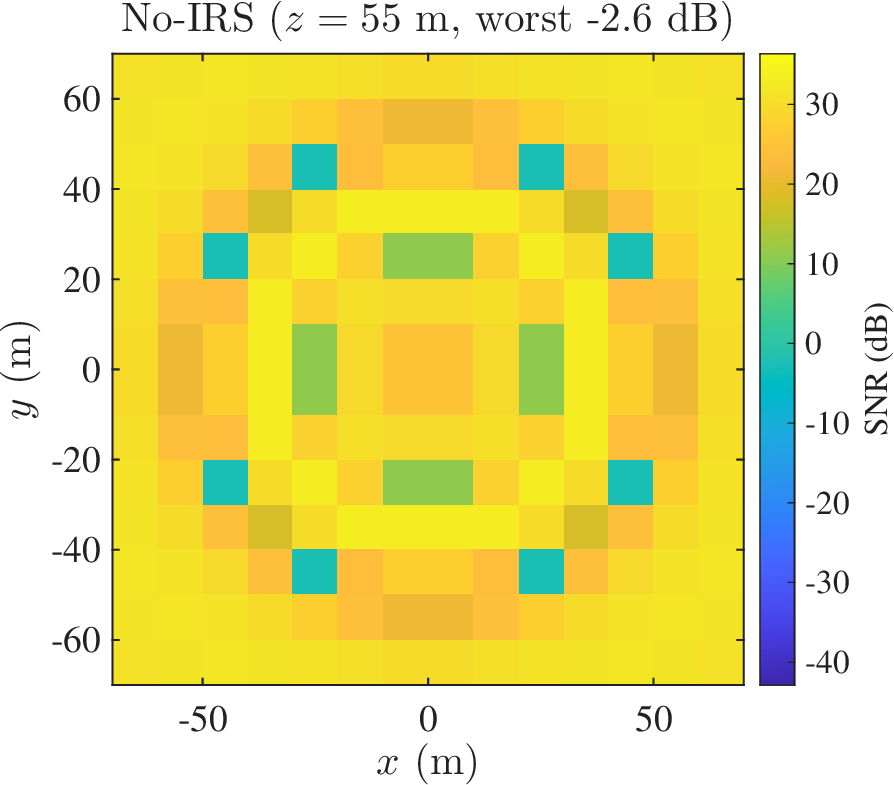} &
			\includegraphics[
			width=0.23\textwidth,
			trim=0 0 0 28,
			clip
			]{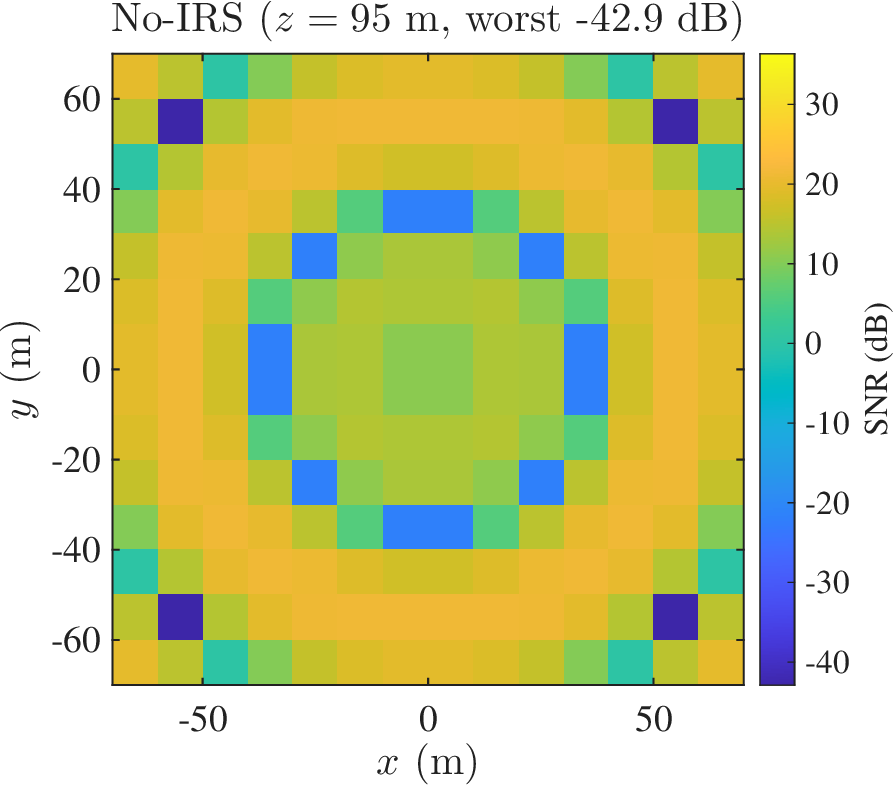} \\[4pt]
			\small Proposed, $z=55$\,m &
			\small Proposed, $z=95$\,m \\[2pt]
			\includegraphics[
			width=0.23\textwidth,
			trim=0 0 0 28,
			clip
			]{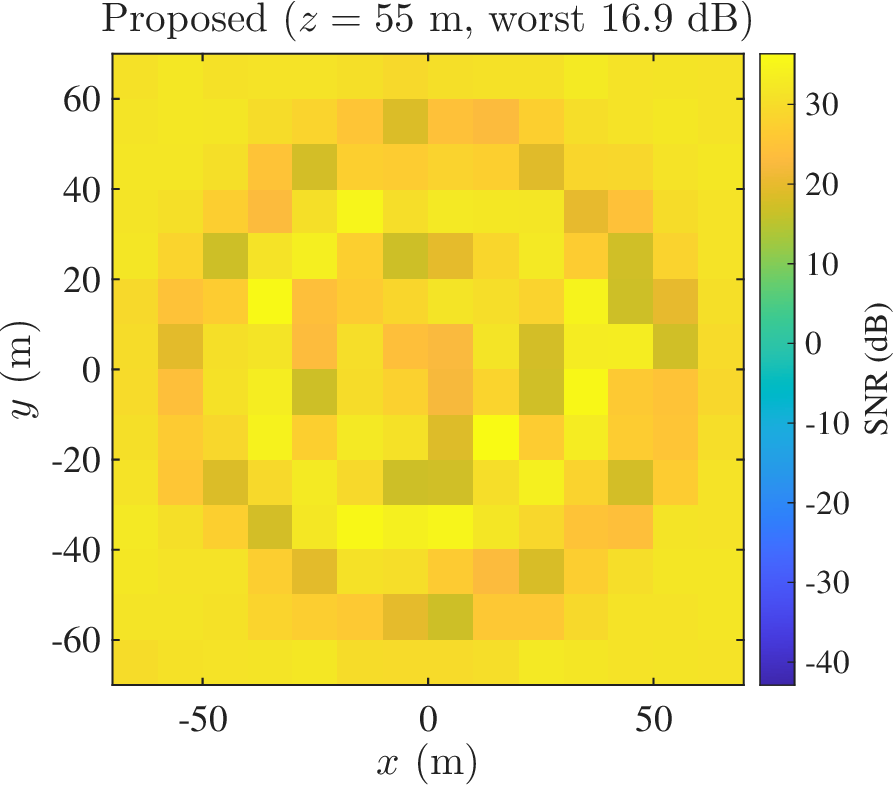} &
			\includegraphics[
			width=0.23\textwidth,
			trim=0 0 0 28,
			clip
			]{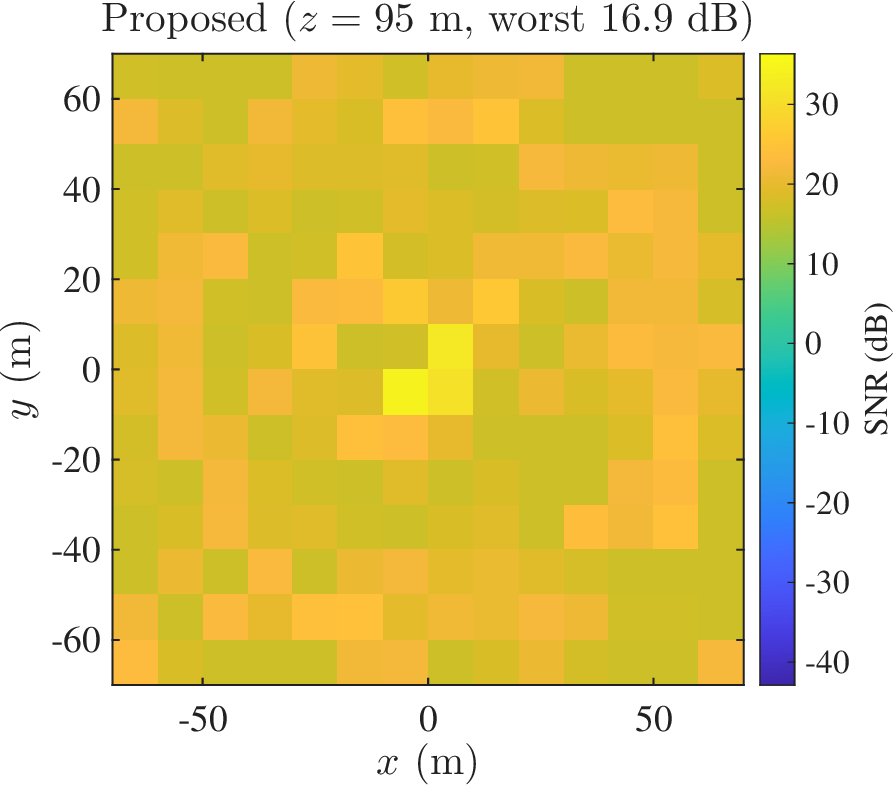}
		\end{tabular}
		\caption{Horizontal SNR heatmaps at altitudes $z=55$\,m and $z=95$\,m.} 
		\label{fig:heatmap}
	\end{figure}
	
	Fig.~\ref{fig:heatmap} presents horizontal SNR heatmaps at $z=55$~m and $z=95$~m. The top row shows the No-IRS baseline, and the bottom row the proposed scheme with $M_{\max}=6$. Under the No-IRS baseline, annular low-SNR regions arise from elevation-dependent nulls of the fixed-downtilt BS array, and the null radius increases with altitude. The worst SNR is $-2.61$~dB at $z=55$~m and decreases to $-42.88$~dB at $z=95$~m, where the altitude slice intersects a deep array-factor null. The proposed IRS deployment raises the worst SNR at both altitudes to about $17.4$~dB. The improvement is especially pronounced at $z=95$~m, where the minimum SNR increases by about $60.3$~dB compared with about $20.0$~dB at $z=55$~m. The $z=95$~m improvement shows that reflected links provide their largest benefit near severe direct-link nulls. The low-SNR rings are substantially compensated by the reflected links, and the SNR distribution becomes more uniform over each horizontal slice. The similar minimum SNR values at the two altitudes are consistent with the max-min objective, which directs the joint site, orientation, and phase-shift design toward weak locations at different heights.

	\begin{figure}[t!]
		\centering
		\includegraphics[width=3.4  in]{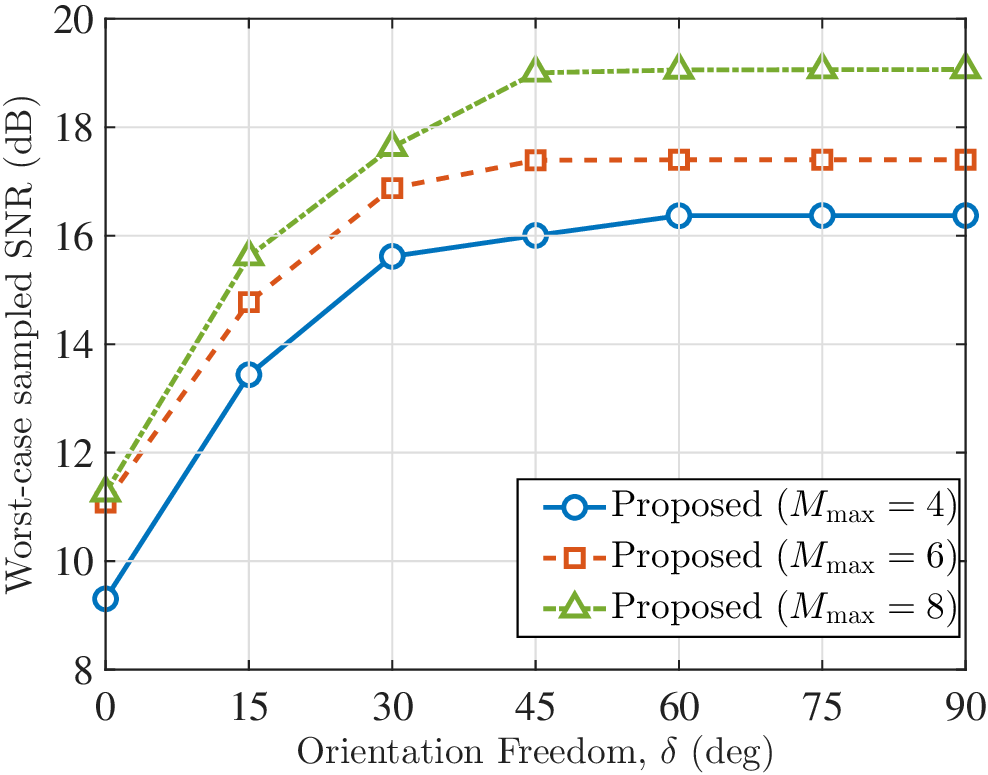}
		\caption{Worst-case SNR versus the orientation freedom $\delta$ for $M_{\max}$.}
		\label{fig:orientation}
	\end{figure}

	In Fig.~\ref{fig:orientation}, we plot the worst-case SNR versus the orientation freedom $\delta$ for the deployment budgets $M_{\max}=4$, $6$, and $8$. Here $\delta$ is the half-angle of the feasible cone imposed on each IRS normal, so that the feasible orientation set in~\eqref{eq:orientation_feasible_set} becomes $\mathcal T_m=\{(\theta_m,\phi_m)\mid\theta_m\in[0^\circ,\delta],\ \phi_m\in[0^\circ,360^\circ)\}$. The worst-case SNR increases with $\delta$ because every orientation feasible under a smaller cone remains feasible after the cone is enlarged. For $M_{\max}=8$, the worst-case SNR rises from $11.26$~dB at $\delta=0^\circ$ to $19.06$~dB at $\delta=90^\circ$, a gain of $7.80$~dB from orientation optimization. The increase is concentrated at small $\delta$, since widening the cone from $0^\circ$ to $45^\circ$ recovers $7.74$~dB of this gain. Increasing $\delta$ further from $45^\circ$ to $90^\circ$ adds only $0.06$~dB, while the mean optimized elevation reaches $44.4^\circ$. The small additional gain beyond $45^\circ$ indicates that most of the useful orientation adjustment is achieved within moderate inclination angles. The gap among the three deployment budgets widens as $\delta$ grows, reaching $1.96$~dB between $M_{\max}=8$ and $M_{\max}=4$ at $\delta=0^\circ$ and $2.69$~dB at $\delta=90^\circ$. At $\delta=0^\circ$, every IRS normal is fixed to the vertical, which restricts all panels to the same mounting orientation. A larger cone allows the selected IRSs to direct their responses toward complementary parts of the airspace and improve the weakest locations.
	
	\section{Conclusion}\label{sec:conclusion}
	In this paper, we studied rooftop IRS deployment for improving the worst-case SNR over a 3D low-altitude airspace served by an existing fixed-downtilt BS. A max-min SNR problem jointly selected rooftop sites and optimized IRS orientations and phase shifts under a deployment budget, with optimized parameters fixed after deployment. The continuous target region was represented by a finite location set augmented near predicted BS array-factor null heights for optimization. We proposed a mixed-integer AO algorithm that retains the binary site-selection variables while generating a nondecreasing and convergent sequence of worst-case SNR values. The analytical results revealed complementary array-size effects on rooftop illumination and directional characteristics. Larger BS arrays narrow the rooftop-height interval receiving main-lobe illumination, whereas larger IRS arrays narrow the sufficient outgoing-direction span retaining a prescribed normalized array gain. Numerical results validated both analytical characterizations and showed that the proposed scheme achieved higher worst-case SNR than the considered benchmarks across deployment budgets. The numerical comparisons confirmed the importance of IRS orientation optimization and region-wide phase-shift design while accounting for the element radiation pattern.

	\bibliographystyle{IEEEtran}
	\bibliography{reference}    
	
\end{document}